\g@addto@macro\bfseries{\boldmath}
\g@addto@macro\mdseries{\unboldmath}
\g@addto@macro\normalfont{\unboldmath}
\g@addto@macro\rmfamily{\unboldmath}
\g@addto@macro\upshape{\unboldmath}
\renewcommand*{\multicitedelim}{\addcomma\space}
    \newlength{\temp@x}%
    \newlength{\temp@y}%
    \newlength{\temp@w}%
    \newlength{\temp@h}%
    \def\my@coords#1#2#3#4{%
      \setlength{\temp@x}{#1}%
      \setlength{\temp@y}{#2}%
      \setlength{\temp@w}{#3}%
      \setlength{\temp@h}{#4}%
      \adjustlengths{}%
      \my@pdfliteral{\strip@pt\temp@x\space\strip@pt\temp@y\space\strip@pt\temp@w\space\strip@pt\temp@h\space re}}%
      \def\my@pdfliteral#1{\pdfliteral page{#1}}
      \def\adjustlengths{}%
      \def\my@pdfliteral #1{}
      \def\adjustlengths{\setlength{\temp@h}{-\temp@h}\addtolength{\temp@y}{1in}\addtolength{\temp@x}{-1in}}%
    \def\Hy@colorlink#1{%
      \begingroup
        \ifHy@ocgcolorlinks
          \def\Hy@ocgcolor{#1}%
          \my@pdfliteral{q}%
          \my@pdfliteral{7 Tr}
        \else
          \HyColor@UseColor#1%
        \fi
    }%
    \def\Hy@endcolorlink{%
      \ifHy@ocgcolorlinks%
        \my@pdfliteral{/OC/OCPrint BDC}%
        \my@coords{0pt}{0pt}{\pdfpagewidth}{\pdfpageheight}%
        \my@pdfliteral{F}
        %
        \my@pdfliteral{EMC/OC/OCView BDC}%
        \begingroup%
          \expandafter\HyColor@UseColor\Hy@ocgcolor%
          \my@coords{0pt}{0pt}{\pdfpagewidth}{\pdfpageheight}%
          \my@pdfliteral{F}
        \endgroup%
        \my@pdfliteral{EMC}%
        \my@pdfliteral{0 Tr}
        \my@pdfliteral{Q}%
      \fi
      \endgroup
    }%
\colorlet{DarkRed}{red!50!black}
\colorlet{DarkGreen}{green!50!black}
\colorlet{DarkBlue}{blue!50!black}
\declaretheorem[numberwithin=section]{theorem}
\declaretheorem[numberlike=theorem]{lemma}
\newcommand{\dist}{d}
\newcommand{\poly}{\operatorname{poly}}
\newcommand{\polylog}{\operatorname{polylog}}
\title{A Faster Distributed Single-Source Shortest Paths Algorithm\thanks{Presented at the \emph{59th Annual IEEE Symposium on Foundations of Computer Science (FOCS 2018)}.}}
\author{
Sebastian Forster\thanks{Department of Computer Sciences, University of Salzburg, Austria. Work partially done while at University of Vienna, Austria. This author previously published under the name Sebastian Krinninger.}
\and Danupon Nanongkai\thanks{EECS, KTH Royal Institute of Technology, Sweden.}
}
\date{}
\begin{document}
\maketitle
\begin{abstract}
We devise new algorithms for the single-source shortest paths (SSSP) problem with non-negative edge weights in the CONGEST model of distributed computing.
While close-to-optimal solutions, in terms of the number of rounds spent by the algorithm, have recently been developed for computing SSSP approximately, the fastest known exact algorithms are still far away from matching the lower bound of $ \tilde \Omega (\sqrt{n} + D) $ rounds by Peleg and Rubinovich~\citem[SICOMP'00]{PelegR00}, where $ n $~is the number of nodes in the network and $ D $ is its diameter.
The state of the art is Elkin's randomized algorithm~\citem[STOC'17]{Elkin17} that performs $ \tilde O(n^{2/3} D^{1/3} + n^{5/6}) $ rounds.
We significantly improve upon this upper bound with our two new randomized algorithms for polynomially bounded integer edge weights, the first performing $ \tilde O (\sqrt{n D}) $ rounds and the second performing $ \tilde O (\sqrt{n} D^{1/4} + n^{3/5} + D) $ rounds. Our bounds also compare favorably to the independent result by Ghaffari and Li~\citem[STOC'18]{GhaffariL18}.
As side results, we obtain a $(1+\epsilon)$-approximation $\tilde O ((\sqrt{n} D^{1/4} + D) / \epsilon) $-round algorithm for \emph{directed} SSSP and  a new work/depth trade-off for exact SSSP on directed graphs in the PRAM model.

\end{abstract}

\section{Introduction}

In this paper, we consider the fundamental problem of computing single-source shortest paths (SSSP) in the CONGEST model~\cite{Peleg00} of distributed computing.
The CONGEST model is one of the major message-passing models studied in the literature and is characterized by synchronized communication in a network via non-faulty bounded-bandwidth links.
Such a communication network is modeled by a graph in which the nodes correspond to the processors in the network and the edges correspond to the communication links between the processors.

In unweighted graphs, the SSSP problem can be solved in $ O (D) $ communication rounds by performing breadth first search, where $ D $~is the diameter of the graph.
In weighted graphs, where we assume that the edge weights do \emph{not} represent the communication speed, a straightforward distributed variant of the Bellman-Ford algorithm~\cite{Ford56,Bellman58,Moore59} computes SSSP in $ O (n) $ rounds, where $ n $ is the number of nodes in the network.
Moreover, Peleg and Rubinovich~\cite{PelegR00} showed a lower bound of $ \tilde \Omega (\sqrt{n} + D) $ for this problem\footnote{In this paper, we use $ \tilde O (\cdot) $-, $ \tilde \Omega (\cdot) $-, $ \tilde \Theta (\cdot) $-, and $ \tilde o (\cdot) $-notation to suppress factors that are polylogarithmic in $ n $.}, where $ D $ is the \emph{unweighted} diameter of the graph, when viewed as the underlying communication network. ($D$ is often called {\em hop diameter}.)
The past few years have witnessed many improved upper bounds when {\em approximate solutions} are allowed (e.g. ~\cite{LenzenPS13,Nanongkai14,HenzingerKN16,ElkinN16,BeckerKKL17,HaeuplerL18,HaeuplerHW18}), culminating in the close-to-optimal~\cite{DasSarmaHKKNPPW12,PelegR00,Elkin06,KorKP13,ElkinKNP14} approximation scheme of Becker et al.~\cite{BeckerKKL17}.
For {\em exact} algorithms, the $O(n)$ bound from the Bellman-Ford algorithm was the state of the art for a long time until in a recent breakthrough, Elkin~\cite{Elkin17} proved an upper bound that is sublinear in~$ n $ for undirected graphs with non-negative edge weights; he obtained a randomized algorithm that performs $ O((n \log n)^{5/6}) $ rounds for $ D = O (\sqrt{n \log n}) $, and $ O (D^{1/3} (n \log n)^{2/3}) $ rounds for larger~$ D $.

\subsection{Our Results}

Our main result is a significant improvement upon Elkin's upper bound for polynomially bounded integer edge weights\footnote{In general, our running times scale with the number of bits needed to represent edge weights.} and $ D = \tilde o (n) $. We devise two randomized (Las Vegas) algorithms, the first performing $ \tilde O (\sqrt{n D}) $ rounds and the second performing $ \tilde O (\sqrt{n} D^{1/4} + n^{3/5} + D) $ rounds; both bounds hold with high probability and in expectation. Our first bound matches the $ \tilde \Omega (\sqrt{n} + D)$ lower bound of Peleg and Rubinovich~\cite{PelegR00} up to polylogarithmic factors when $ D $ is polylogarithmic in $ n $. 

Our algorithms in fact work in a more restricted model called {\em Broadcast CONGEST}, where a node must send the same message to every neighbor in each round. 
The bound holds when edge weights are in $\{0, 1, \ldots, \poly(n)\}$ as typically assumed in the literature. Note that our bounds hold even when edges are {\em directed}.
The directed case refers to when edges have directions, but the communication is always bi-directional. A question about this case was raised in \cite{Nanongkai14}. The previous bound by Elkin does not work for this case. 
Note that in general our bounds have to be multiplied by the number of bits needed to represent edge weights (which is $O(\log n)$ under a typical assumption), while Elkin's bound does not.

\paragraph{Independent Work by Ghaffari and Li \cite{GhaffariL18}.} Recently in STOC 2018, and independently from our results, Ghaffari and Li presented a distributed algorithm for exact SSSP with time complexity $\tilde O(n^{3/4}D^{1/4})$ and another algorithm with complexity $\tilde O(n^{3/4+o(1)}+\min\{n^{3/4}D^{1/6}, n^{6/7}\}+ D)$. 
Our bounds compare favorably with theirs in the entire range of parameters.
Like our bounds, their bounds hold when edges are directed. Their bounds also depend on the number of bits needed to represent edge weights.

\paragraph{Additional Results.} 

Using techniques developed in this paper, we obtain two additional results, which might be interesting for putting our main result and our techniques into context.
The first is a $(1+\epsilon)$-approximation $\tilde O ((\sqrt{n} D^{1/4} + D) / \epsilon) $-round algorithm for \emph{directed} SSSP.
This algorithm is Monte Carlo, meaning that its approximation guarantee holds with high probability, and works in the Broadcast CONGEST model. 
Previously such a bound was known only for the special case of {\em single-source reachability} \cite{GhaffariU15}, where we want to know whether there is a directed path from the source node to each node. None of previous approximation algorithms for SSSP  (e.g. \cite{LenzenPS13,Nanongkai14,HenzingerKN16,ElkinN16,BeckerKKL17}) can handle this case.

Our second result is a new work/depth trade-off for exact SSSP on directed graphs in the PRAM model.
We provide algorithm with $ \tilde O (m t + m n / t + n^3/t^3) $ work and $ \tilde O (t) $ depth (for any fixed $ 1 \leq t \leq n $).
The parallel SSSP problem has received considerable attention in the literature~\cite{Spencer97,KleinS97,Cohen97,BrodalTZ98,ShiS99,Cohen00,MeyerS03,MillerPVX15,Blelloch0ST16}, but we are aware of only two relevant results for exact SSSP in directed graphs.
First, the algorithm of Spencer~\cite{Spencer97} has $ \tilde O (n^3 / t^2 + m) $ work and $ \tilde O (t) $ depth (for any $ 1 \leq t \leq n $).
Second, the algorithm of Klein and Subramanian has $ \tilde O (m \sqrt{n}) $ work and $ \tilde O (\sqrt{n}) $ depth, but does not give any trade-offs.
Thus, our algorithm gives a better trade-off than the state of the art when $ t = \tilde o (\sqrt{n}) $ and in this case it has $ \tilde O (m n / t + n^3/t^3) $ work and $ \tilde O (t) $ depth.
In this range, the trade-off is (up to polylogarithmic factors) the same as the one of Shi and Spencer~\cite{ShiS99} for undirected graphs.\footnote{The algorithm of Shi and Spencer uses a hop set construction that is tailored to undirected graphs; the same hop set construction has later also been used in the distributed setting~\cite{Nanongkai14,Elkin17}.}

\subsection{Further Related Work}

Closely related are {\em All-Pairs Shortest Paths (APSP)} and {\em $k$-Source Shortest Paths} ($k$-SSP) problems. The current best upper bound for APSP is $\tilde O(n^{5/4})$ \cite{HuangNS17}. The situation is more complicated for $k$-SSP. We refer to \cite{FrischknechtHW12,PelegRT12,HolzerW12,LenzenP13,Elkin17,HuangNS17,GhaffariL18,AgarwalRKP18} and references therein for details.

The exact shortest paths problem has also received attention in the Congested Clique model, which is the special case of the CONGEST model where the network is a clique.
Nanongkai obtained a SSSP algorithm that performs $ \tilde O (\sqrt{n}) $ rounds in this model and Censor-Hillel et al.~\cite{CensorHillelKKLPS15} obtained an algorithm that performs $ \tilde O (n^{1/3}) $ rounds, solving the more general APSP problem.

There are many other graph problems studied in the CONGEST model, such as minimum spanning tree, minimum cut, and maximum flow; see, e.g., \cite{Gafni85,ChinT85,Awerbuch87,GarayKP98,KuttenP98,GhaffariK13,Ghaffari14,NanongkaiS14_disc,GhaffariKKLP18,PanduranganRS-STOC17,Elkin-podc17-mst,Bar-YehudaCGS17}.

\subsection{Overview of Techniques}

The starting point of our algorithms is the classic {\em scaling framework} which was heavily used in the sequential setting since the 80s (e.g. \cite{Gabow85,GabowT89,Goldberg95-soda93}). In the context of distributed shortest paths, this framework was used first in the algorithm of  Huang~et~al.~\cite{HuangNS17} for APSP. This algorithm follows Gabow's \emph{bit-wise} scaling technique~\cite{Gabow85}, where edge weights are considered one bit at a time so that distances between nodes can be bounded from above. This approach was later taken by Ghaffari and Li~\cite{GhaffariL18} for computing SSSP. 
Our starting point is rather different from Huang~et~al.~\cite{HuangNS17} and Ghaffari and Li~\cite{GhaffariL18} in that we will follow \emph{recursive} scaling, Gabow's second scaling technique~\cite{Gabow85}, and its extension by Klein and Subramanian~\cite{KleinS97}, who applied it for SSSP in the PRAM model.
At a high level, this approach uses a very general reduction to extend certain {\em approximate} SSSP algorithms to {\em exact} SSSP algorithms. 
Intuitively, this allows us to borrow some existing tools developed for approximation algorithms. 

The challenge here lies in two special aspects of this reduction.
First, it is an inherent necessity of the reduction that the approximation algorithm works on directed graphs and thus our exact SSSP algorithm also works on directed graphs -- assuming that the underlying communication network is undirected. Because of this, it is not a surprise that we achieve a new approximation algorithm that can handle the directed case as an additional result.

Second, and more importantly, the reduction will not work for any arbitrary approximate SSSP algorithm, as it requires the distance estimates returned by the algorithm to satisfy an additional condition about `dominating' the edge weights of the input graph.
Klein and Subramanian can ensure this property by (1) computing a directed hop set to effectively reduce the approximate shortest path diameter of the graph, (2) adding the edges of the hop set to the input graph, and (3) obtaining the distance estimate from performing a ``small'' number iterations of the Bellman-Ford algorithm, exploiting the reduced approximate shortest path diameter.
This approach cannot directly be transferred to the CONGEST model because the additional edges of the hop set can only be ``simulated'' by the nodes in the network, and the standard way of doing so will not yield a small enough guarantee on the number of rounds spent in the final Bellman-Ford step.
We circumvent this problem by using a different algorithm design instead, where we (1) compute a skeleton graph consisting of a few ``important'' nodes, including the source node, and edges between the skeleton nodes with weights equal to approximate pairwise distances, (2) solve recursively on the skeleton graph, and (3) run a ``small'' number of iterations of the Bellman-Ford algorithm incorporating the distances from the source computed in step~(2) to obtain distance estimates with the domination property.
This may sound a bit circular at first because we again face the problem of having to compute exact SSSP on the skeleton graph.
However, computation on the smaller skeleton graph is now simulated by performing global broadcasts in the whole network.
We can thus benefit from the fact that we can use slightly different algorithm design techniques, such as ``adding'' edges to the graph without immediately increasing the number of rounds, just like in the PRAM model.

To this end, we note that although our algorithm follows the same general framework as that by Klein and Subramanian~\cite{KleinS97}, it is not simply a mere ``translation'' of their PRAM algorithm to the CONGEST model. It in fact leads to a new work/depth trade-off in the PRAM model that could not be achieved with the algorithm of Klein and Subramanian.

\subsection{Organization}

The rest of this paper is organized as follows:
In Section~\ref{sec:preliminaries} we introduce all definitions necessary for our main results and review existing tools from the literature.
In Section~\ref{sec:auxiliary algorithm} we schematically develop the central ``auxiliary'' algorithm that we need to obtain our results.
We prove its correctness, but do not yet analyze its model-specific complexity.
In Sections~\ref{sec:first distributed algorithm} and~\ref{sec:second distributed algorithm} we obtain our two main results by providing two slightly different implementations of the auxiliary algorithm together with their complexity analyses.
Finally, in Section~\ref{sec:additional_results} we sketch how our techniques imply additional results for approximate SSSP in the Broadcast CONGEST model and exact SSSP in the PRAM model.

\section{Preliminaries}\label{sec:preliminaries}

\subsection{CONGEST Model and Problem Formulation}

The \emph{CONGEST model}~\cite{Peleg00} is a synchronous message-passing model with non-faulty bounded-bandwidth links.
More formally, consider a communication network of processors modeled by an undirected graph $ N = (V, L) $ where each node of $ V $ models a processor and each pair of nodes $ \{ u, v \} \in L \subseteq \binom{V}{2} $ indicates a bidirectional communication link between the processors corresponding to $ u $ and $ v $, respectively.
In the remainder of this paper, we identify nodes and processors.
In the CONGEST model, the initial knowledge about $ N $ is distributed among the nodes: every node has a unique identifier of size $ O (\log{n}) $ (where $ n = |V| $) initially known only to itself and its neighbors, i.e., the nodes to which it has direct communication links.
The nodes communicate in synchronous rounds: At the beginning of each round, every node may send to each of its neighbors a message of size $ B =\Theta (\log{n}) $ and subsequently receive the messages sent by its neighbors.
Before the next round begins, each node may perform internal computation based on all messages it has received so far and its local knowledge of the network.
The complexity of an algorithm is usually measured in the number of rounds and the total number of messages sent until the algorithm terminates, whereas internal computation is considered free.
Typically, the asymptotic complexity is expressed in terms of $ n := |V| $ and the diameter $ D $ of the unweighted communication network~$ N $, i.e., the maximum distance between any pair of nodes in $ N $.
The restriction of the bandwidth~$ B $ to $ \Theta (\log{n}) $ distinguishes the CONGEST model from the \emph{LOCAL model}~\cite{Linial92}.
The \emph{Broadcast CONGEST model} is a specialization of the CONGEST model with the additional constraint that, whenever a node sends a message in some round, it has to send (\emph{broadcast}) \emph{the same} message to all of its neighbors.
In general, we call a round in which, for every node, the message sent to the neighbors is the same, a \emph{broadcast round}.

In the single-source shortest paths (SSSP) problem, we are given a weighted directed graph $ G = (V, E, w) $ and a distinguished source node $ s \in V $ and the task is to compute the distance $ \dist_G (s, v) $ from $ s $ to $ v $ in $ G $ for every node $ v \in V $.
In the CONGEST model, we require that the nodes of $ G = (V, E, w) $ are equal to the nodes of the underlying communication network $ N = (V, L) $ and that for every directed edge $ (u, v) \in E $ there is an undirected communication link $ \{ u, v \} \in L $.
The input is distributed among the nodes of the network as follows: initially, every node knows (a) whether it is the source $ s $ or not and (b) its set of incoming and outgoing edges together with their weight.
The SSSP problem is solved as soon as every node knows its distance from $ s $ in $ G $.

In this paper, we consider the restriction of the SSSP problem to \emph{non-negative integer edge weights}, where we let $ W $ denote the maximum edge weight.
We work under the typical assumption that each edge weight can be encoded in a constant number of messages, i.e. $ B = \Omega(\log{W}) $, which implies that $ W $ is polynomial in $ n $.
Nevertheless, we do work out the dependence on $ W $ in our algorithms for the interested reader.
Observe, that in the CONGEST model an implicit shortest path tree rooted at~$ s $ can be constructed by performing an additional round: first every node sends its distance from $ s $ to all of its~ neighbors and then every node $ v $ determines one incoming neighbor $ u $ such that $ \dist_G (s, v) = \dist_G (s, u) + w (u, v) $ which will serve as the parent in the shortest path tree.

\subsection{Notation and Terminology}

Consider a weighted directed graph $ G = (V, E, w) $ with source node $ s \in V $, where $ V $ of size $ n := |V| $ is the set of nodes, $ E \subseteq V^2 $ of size $ m := |E| $ is the set of edges, and $ w : E \rightarrow \{ 0, 1, \dots, W \} $ assigns a non-negative integer weight to each edge.
For every path~$ \pi $ in~$ G $, denote the weight of $ \pi $ in $ G $ by $ w (\pi) := \sum_{e \in \pi} w (e) $.
For every pair of nodes $ u, v \in V $ denote the distance from $ u $ to $ v $ in $ G $, i.e., the weight of the shortest path from $ u $ to $ v $ in $ G $, by $ \dist_G (u, v) $.

In defining multiplicative distance approximation, one can consider both overestimation and underestimation of the true distance.
For technical reasons, we will refer to both types of distance estimates in our algorithms and thus use the following convention:
If $ \alpha \geq 1 $, then $ \tilde{\dist} (u, v) $ is an $ \alpha $-approximation of $ \dist_G (u, v) $ for a pair of nodes $ u, v, \in V $ if
\begin{equation*}
\dist_G (u, v) \leq \tilde{\dist} (u, v) \leq \alpha \cdot \dist_G (u, v)
\end{equation*}
and if $ \alpha < 1 $, then $ \hat{\dist} (u, v) $ is an $ \alpha $-approximation of $ \dist_G (u, v) $ if
\begin{equation*}
\alpha \cdot \dist_G (u, v) \leq \hat{\dist} (u, v) \leq \dist_G (u, v) \, .
\end{equation*}
Since the edge weights, and thus all pairwise distances, are integer, we assume without loss of generality that the distance estimates $ \tilde{\dist} (u, v) $ and $ \hat{\dist} (u, v) $ are integer (otherwise $ \lfloor \tilde{\dist} (u, v) \rfloor $ and $ \lceil \hat{\dist} (u, v) \rceil $, respectively, will serve as the desired $ \alpha $-approximations).

For every pair of nodes $ u, v \in V $ and every integer $ h \geq 1 $, we define the shortest $h$-hop path from $ u $ to~$ v $ to be the path of minimum weight among all paths from $ u $ to $ v $ with at most $ h $ edges.
The $h$-hop distance from $ u $ to $ v $ in $ G $, denoted by $ \dist_G^h (u, v) $ is the weight of the shortest $h$-hop path from $ u $ to $ v $.\footnote{Although the $h$-hop distances of all pairs of nodes do not necessarily induce a metric, the term `$h$-hop distance' is somewhat established in the literature.}

\subsection{Toolkit}

In the following, we review known tools that we use to design our algorithm.

\subsubsection{Reduction from Approximate SSSP to Exact SSSP}

We will apply the following reduction by Klein and Subramanian~\cite{KleinS97} for extending certain approximate SSSP algorithms to exact SSSP algorithms.
The reduction is based on Gabow's recursive scaling algorithm~\cite{Gabow85}.
Initially this reduction was formulated for the PRAM model, but it can be adapted to the CONGEST model in a straightforward way.
\begin{theorem}[Implicit in~\cite{KleinS97}]\label{thm:approximate to exact reduction distributed}
Assume there is an auxiliary algorithm $ \mathcal{A} $ in the CONGEST model that, given a directed graph $ G = (V, E, w_G) $ with non-negative integer edge weights and a source node~$ s $, computes a distance estimate $ \hat{\dist} (s, \cdot) $ such that
\begin{equation}
\tfrac{1}{2} \cdot \dist_G (s, v) \leq \hat{\dist} (s, v) \leq \dist_G (s, v) \label{eq:reduction approximation}
\end{equation}
for every node~$ v \in V $ and
\begin{equation}
\hat{\dist} (s, v) \leq \hat{\dist} (s, u) + w_G (u, v) \label{eq:reduction domination}
\end{equation}
for every edge $ (u, v) \in E $.
Then there is an exact SSSP algorithm $ \mathcal{B} $ in the CONGEST model that, given a directed graph $ G = (V, E, w_G) $ with non-negative integer weights in the range $ w_G (e) \in \{ 0, 1, \dots, W \} $ for every edge $ e \in E $ and a source node $ s $, makes $ O (\log{n W}) $ calls to algorithm~$ \mathcal{A} $ (on graphs $ G' = (V, E, w_{G'}) $ with non-negative integer weights in the range $ w_{G'} (e) \in \{0, \dots, O (n) \} $ for every edge $ e \in E $) and has an additional overhead of $ \tilde O (D \log{W}) $ broadcast rounds.
If $ \mathcal{A} $ is a randomized Monte Carlo algorithm that is correct with high probability, then $ \mathcal{B} $ is a Las Vegas algorithm whose bounds on the number of $ O (\log{n W}) $ calls to~$ \mathcal{A} $ and the additional overhead of $ \tilde O (D \log{W}) $ broadcast rounds, respectively, hold with high probability and in expectation.
\end{theorem}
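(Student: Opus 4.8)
The plan is to realize $ \mathcal{B} $ via Gabow's recursive scaling in the streamlined form used by Klein and Subramanian, where every use of $ \mathcal{A} $ is wrapped in a \emph{reweighting by a feasible potential}. The single fact I will lean on is this: if $ \hat{\dist}(s,\cdot) $ satisfies the domination inequality~\eqref{eq:reduction domination} for a graph $ H=(V,E,w_H) $ and $ \hat{\dist}(s,s)=0 $, then the reduced weights $ w_H^{\hat{\dist}}(u,v):=w_H(u,v)+\hat{\dist}(s,u)-\hat{\dist}(s,v) $ are non-negative integers, and since reduced weights telescope along paths, $ \dist_{(V,E,w_H^{\hat{\dist}})}(s,v)=\dist_H(s,v)-\hat{\dist}(s,v) $ for all $ v $. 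Together with~\eqref{eq:reduction approximation} this means one call to $ \mathcal{A} $ produces a potential whose reduced graph has every $ s $-distance at least \emph{halved}, and conversely that adding $ \hat{\dist}(s,\cdot) $ back to exact distances in the reduced graph recovers the exact distances in $ H $. I will also use the trivial \emph{capping} observation: if all $ s $-distances in a graph are at most $ N $, then lowering every weight above $ N $ down to $ N $ leaves all $ s $-distances unchanged, so any graph with small $ s $-distances may be assumed to have small weights.

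Phase~1 of $ \mathcal{B} $ is bit-scaling over the $ K=O(\log W) $ bits of the weights, from most significant downward; let $ G_k $ denote the graph whose weights are the top $ k $ bits of those of $ G $. I maintain an estimate $ \tilde{\dist}_k(s,\cdot) $ that (i) underestimates $ \dist_{G_k}(s,\cdot) $, (ii) is within additive $ O(n) $ of it, and (iii) satisfies the domination inequality for $ G_k $; the base case $ G_0 $ has all weights $ 0 $ and $ \tilde{\dist}_0\equiv 0 $ works. To pass from $ k-1 $ to $ k $: use the potential $ \phi:=2\tilde{\dist}_{k-1}(s,\cdot) $, which is feasible for $ G_k $ because $ w_{G_k}\ge 2w_{G_{k-1}} $ and (iii) holds at $ k-1 $; the reduced graph $ G_k^\phi $ has all $ s $-distances $ O(n) $, since $ \dist_{G_k}(s,v)\le 2\dist_{G_{k-1}}(s,v)+(n-1) $ (reweight a shortest $ G_{k-1} $-path) and, by (i)--(ii), $ \phi(v)\ge 2\dist_{G_{k-1}}(s,v)-O(n) $; cap its weights at $ \Theta(n) $ to get $ \bar G_k $ with weights in $ \{0,\dots,O(n)\} $. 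Now $ O(1) $ invocations of $ \mathcal{A} $ on (reweighted, recapped) versions of $ \bar G_k $ --- each more than halving the residual additive error, which is fine because $ s $-distances there are only $ O(n) $ --- yield an estimate for $ \bar G_k $ that again underestimates, has additive error $ O(n) $, and dominates; undoing $ \phi $ gives $ \tilde{\dist}_k $. After the last bit, $ \tilde{\dist}_K $ underestimates $ \dist_G(s,\cdot) $ within additive $ O(n) $ and dominates $ G $. Phase~2 applies the potential $ \phi:=\tilde{\dist}_K(s,\cdot) $ once more: the reduced-and-capped graph $ \bar G $ has weights and $ s $-distances both $ O(n) $, and $ O(\log n) $ successive reweightings by outputs of $ \mathcal{A} $, each halving the distance bound and each on an $ O(n) $-weighted graph, drive all $ s $-distances below $ 1 $ and hence to $ 0 $; the accumulated sum of the $ \mathcal{A} $-outputs is then exactly $ \dist_{\bar G}(s,\cdot) $, and adding $ \phi $ back gives $ \dist_G(s,\cdot) $. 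This uses $ O(1)\cdot O(\log W)+O(\log n)=O(\log nW) $ calls to $ \mathcal{A} $, all on $ \{0,\dots,O(n)\} $-weighted graphs; the non-$ \mathcal{A} $ work is, per stage, $ O(1) $ broadcast rounds to send the $ O(\log nW) $-bit potentials to edge endpoints plus $ O(D) $ rounds for coordination (aggregating the current distance bound, etc.), i.e.\ $ \tilde O(D\log W) $ broadcast rounds overall.

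For the Las Vegas claim I first lower $ \mathcal{A} $'s failure probability below $ n^{-c-1} $ so that all $ O(\log nW) $ calls are correct with probability $ \ge 1-n^{-c} $. Whenever a reweighting is formed, each node checks that its own reduced edge weights are non-negative ($ O(1) $ broadcast rounds); a violation triggers a restart. At the end I verify the output $ \bar{\dist}(s,\cdot) $ directly: $ \bar{\dist}(s,s)=0 $; $ \bar{\dist}(s,v)\le\bar{\dist}(s,u)+w_G(u,v) $ for every edge, which certifies $ \bar{\dist}(s,\cdot)\le\dist_G(s,\cdot) $; and the subgraph of tight edges contains an $ s $-rooted in-tree spanning all reachable nodes, which certifies $ \bar{\dist}(s,\cdot)\ge\dist_G(s,\cdot) $ --- both checkable in $ \tilde O(D) $ broadcast rounds. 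These certificates are sound, so $ \mathcal{B} $ never errs; a run fails the check only if some $ \mathcal{A} $-call erred, which happens with probability $ \le n^{-c} $, so the first run succeeds with high probability and in expectation $ O(1) $ runs are needed, yielding the stated high-probability and expected bounds.

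The step I expect to be genuinely delicate is the bookkeeping of Phase~1: keeping the additive error at $ O(n) $ with only $ O(1) $ refinement calls per bit (naively one would spend $ \Theta(\log n) $ per bit and reach $ O(\log n\log W) $ total calls), and checking that the domination inequality~\eqref{eq:reduction domination} survives the two compositions ``reweight by an $ \mathcal{A} $-output'' and ``undo a potential'', as well as the capping step (which only lowers weights and hence can only help domination). Establishing the supporting facts --- $ \dist_{G_k}(s,v)\le 2\dist_{G_{k-1}}(s,v)+(n-1) $ and $ \dist_{G_k}(s,v)\ge 2\dist_{G_{k-1}}(s,v) $, non-negativity and the telescoping identity for each reweighting, and the compatibility of capping with both the distance identity and the domination inequality --- is where the work concentrates; the rest is a routine rendering of the scaling recursion as broadcast rounds.
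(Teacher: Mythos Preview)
Your reduction is correct, but it follows a genuinely different route from the paper's. The paper (following Klein--Subramanian directly) runs a \emph{single} recursion on a distance bound~$\Delta$, initialized to $nW$ and shrinking by a factor~$3/4$ each step: at each step it \emph{scales the current weights down} by $2^\delta \approx \Delta/(2n)$ so that the resulting graph $\hat G$ has weights in $\{0,\dots,O(n)\}$, calls $\mathcal{A}$ once on~$\hat G$, and reweights the current graph by the potential $2^\delta\hat{\dist}(s,\cdot)$, which provably drops all $s$-distances to at most $\tfrac{3}{4}\Delta$. There is no bit-reveal, no separate cleanup phase, and exactly one call to $\mathcal{A}$ per iteration, giving $O(\log_{4/3}(nW))$ calls in total.

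By contrast, you combine two scaling schemes: Phase~1 is Gabow's \emph{bit-wise} scaling (which the paper explicitly contrasts with the recursive scaling it uses), maintaining an $O(n)$-additive-error dominating underestimate across the $O(\log W)$ bits, and Phase~2 is a final $O(\log n)$-step halving to drive the residual to zero. The step you flag as delicate---keeping the additive error bounded by a \emph{fixed} $Cn$ with only $O(1)$ calls per bit---does go through (two calls suffice: the error after revealing a bit is at most $(2C+1)n$, and two halvings bring it below $Cn$ for any $C\ge 1$), and your domination/capping bookkeeping is sound. What the paper's approach buys is simplicity: by scaling weights \emph{down} rather than revealing bits, it sidesteps your fixed-point argument entirely and needs only one call to $\mathcal{A}$ per level. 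What your approach buys is conceptual separation of the $\log W$ and $\log n$ contributions, at the cost of a more intricate invariant and roughly twice as many calls in the $\log W$ part. Your Las~Vegas verification is also slightly more elaborate than the paper's, which simply checks the Bellman--Ford fixed-point condition $\tilde{\dist}(s,v)=\min_{(u,v)\in E}\tilde{\dist}(s,u)+w_G(u,v)$ at every node in $O(D)$ rounds.
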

We give a proof of Theorem~\ref{thm:approximate to exact reduction distributed} for completeness in Appendix~\ref{apx:reduction from approximate to exact}.

Our task of designing an exact SSSP algorithm thus reduces to designing an auxiliary approximate SSSP algorithm satisfying conditions \eqref{eq:reduction approximation}~and~\eqref{eq:reduction domination}.
Note that~\eqref{eq:reduction domination} is a special form of the \emph{triangle inequality} expressing that the distance estimates \emph{dominate} the edge weights of $ G $.
The reduction will fail if this additional constraint is not met, and thus we cannot use arbitrary distance estimates, which poses a challenge in designing the auxiliary algorithm.
As Klein and Subramanian~\cite{KleinS97} observed, one can for example ensure~\eqref{eq:reduction domination} by obtaining $ \hat{\dist} (s, \cdot) $ as the exact distances from $ s $ in a suitable supergraph $ G' = (V, E', w_{G'}) $ of $ G = (V, E, w_G) $ where $ E \subseteq E' $ and $ w_{G'} (e) \leq w_G (e) $ for every edge $ e \in E $.

\subsubsection{Computing Bounded Hop Distances}

We will use several primitives for computing (approximate) bounded-hop distances.
The well-known ``cornerstone'' for this task is the Bellman-Ford algorithm.

\begin{lemma}[Folklore]\label{lem:Bellman Ford CONGEST}
In the Broadcast CONGEST model, one can, given a weighted directed graph~$ G $ with source node $ s $ and an integer hop parameter $ h \geq 1 $, compute, for every node~$ v $, the value $ \dist_G^h (s, v) $ and make it known to $ v $ in $ O(h) $ rounds such that each node in total broadcasts at most $ O (h) $ messages to its neighbors by using a synchronized version of the Bellman-Ford algorithm
\end{lemma}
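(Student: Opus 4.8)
The plan is to run the standard \emph{synchronized} Bellman-Ford algorithm and argue its correctness by a hop-count induction. Each node $v$ maintains a value $d(v)$, initialized to $d(s) = 0$ and $d(v) = \infty$ for $v \neq s$. The algorithm proceeds in $h$ iterations, each consisting of a single round in which every node broadcasts its current value $d(v)$ to all of its neighbors; upon receiving the values of its in-neighbors, node $v$ replaces $d(v)$ by $\min\{d(v),\, \min_{(u,v) \in E} (d(u) + w_G(u,v))\}$ (internal computation, which is free). Since in every round each node transmits the \emph{same} value to all of its neighbors, every round is a broadcast round, so the algorithm is admissible in the Broadcast CONGEST model; it uses exactly $h$ such rounds and each node broadcasts exactly one message per round.

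For correctness I would prove the invariant that after $i$ iterations $d(v) = \dist_G^i(s,v)$ for every node $v$, by induction on $i$. The base case $i = 0$ holds by the initialization, using the convention $\dist_G^0(s,s) = 0$ and $\dist_G^0(s,v) = \infty$ otherwise. For the inductive step, note that a path of minimum weight from $s$ to $v$ among all paths with at most $i+1$ edges either uses at most $i$ edges — contributing the value $\dist_G^i(s,v)$ — or uses exactly $i+1$ edges, in which case deleting its last edge $(u,v)$ leaves an $s$--$u$ path with at most $i$ edges of weight at least $\dist_G^i(s,u)$; hence $\dist_G^{i+1}(s,v) = \min\{\dist_G^i(s,v),\, \min_{(u,v) \in E}(\dist_G^i(s,u) + w_G(u,v))\}$. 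By the induction hypothesis, the right-hand side is exactly the value that $v$ computes in iteration $i+1$. After $h$ iterations, every node $v$ therefore knows $\dist_G^h(s,v)$.

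For the message-size bookkeeping I would check that $d(v)$ always fits into $O(1)$ messages of $\Theta(\log n)$ bits: we may assume $h \leq n$ (for $h \geq n$ the $h$-hop distance equals $\dist_G(s,v)$, and it suffices to run the algorithm for $n$ iterations), so every finite value attained by $d(v)$ during the execution is the weight of some path with at most $n$ edges and hence at most $nW$, which is polynomial in $n$ under the standard assumption $B = \Omega(\log W)$; the value $\infty$, signalling that $v$ is not reachable from $s$ within $h$ hops, is encoded by a dedicated symbol. There is no genuine obstacle here — this is a folklore statement — and the only points requiring a little care are the synchronization claim in the induction (that iteration $i$ updates rely precisely on the $(i-1)$-hop values broadcast in the previous round, which is where the lock-step structure of the CONGEST model is used) and the bandwidth accounting just described.
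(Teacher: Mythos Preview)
Your proof is correct and is exactly the standard argument one would expect for this folklore statement. The paper does not actually supply a proof of this lemma (it is stated as folklore and left unproved), so there is nothing to compare against; your synchronized Bellman-Ford with the hop-count induction and the bandwidth bookkeeping is the canonical justification.
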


Note that the Bellman-Ford algorithm works for arbitrary weights, and in particular also for non-negative integer weights.

While the dependence on the number of rounds in the Bellman-Ford algorithm is optimal, there is some room for improvement in the number of messages.
Using a well-known weight-rounding technique~\cite{KleinS97,Cohen98,Zwick02,Bernstein09,Madry10,Bernstein13,Nanongkai14}, the number of messages can be reduced at the cost of introducing some approximation error.
Algorithmically, this technique amounts to performing a single-source shortest path computation up to bounded distance in an integer-weighted graph.
For this task, one can use a shortest path algorithm similar to breadth-first search that charges each node (at most) once for sending distance information to its neighbors.

\begin{lemma}[Implicit in~\cite{Nanongkai14}]\label{lem:approximate bounded hop distances CONGEST}
In the Broadcast CONGEST model, there is an algorithm that, given a directed graph $ G $ with non-negative integer edge weights, a fixed source node $ s $, and an integer hop parameter $ h \geq 1 $, computes, for every node~$ v $, a distance estimate $ \tilde d (s, v) $ ultimately known to $ v $ such that
\begin{equation*}
\dist_G (s, v) \leq \tilde{d} (s, v) \leq (1 + \epsilon) \dist_G^h (s, v)
\end{equation*}
in $ \tilde O (h \log{W} / \epsilon + D) $ rounds where each node in total broadcasts at most $ \tilde O (\log{W} ) $ messages to its neighbors.
\end{lemma}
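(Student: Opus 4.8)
The plan is to use the classical \emph{weight-scaling} (weight-rounding) technique: reduce the approximate bounded-hop problem to $O(\log(hW))$ \emph{exact} bounded-distance single-source computations in suitably rescaled, strictly positively weighted graphs, each of which is solved by a Dijkstra-like exploration in which every node broadcasts at most once. Since $\dist_G^h(s,v)=0$ exactly when $v$ is reachable from $s$ by a path of at most $h$ edges of weight $0$, I would first run a breadth-first search of depth $h$ restricted to the zero-weight edges to identify these nodes (a BFS has each node broadcasting at most once, costing $O(h)$ rounds and $O(1)$ messages per node); such nodes set $\tilde d(s,v):=0$, and every remaining node has $\dist_G^h(s,v)\ge 1$.

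For each scale $j=0,1,\dots,O(\log(hW))$, let $B:=2^j$ (a guess for $\dist_G^h(s,v)$; note $\dist_G^h(s,v)\le hW$), set $\lambda:=h/(\epsilon' B)$ with $\epsilon':=\epsilon/4$, and define rescaled weights $w_B(e):=\lceil\lambda\, w(e)\rceil+1$, which are positive integers. Then I would run the depth-bounded exploration in the graph weighted by $w_B$: each node maintains a tentative distance from $s$, \emph{settles} (and broadcasts its settled value exactly once) in the round equal to its tentative distance, and the exploration stops after $L:=\lceil h/\epsilon'\rceil+2h=O(h/\epsilon)$ rounds. Because all $w_B$-weights are at least $1$, a node's tentative value, once it equals the current round number, is already final (any improvement would have to arrive through a node that has strictly smaller distance, hence already settled and already broadcast), so this is a legitimate bounded Dijkstra: each node settles at most once within $L$ rounds, sending a single $O(\log n)$-bit message. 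Writing $\delta_B(v)$ for the value at which $v$ settles (or $+\infty$ otherwise) and $\tilde d_B(s,v):=\delta_B(v)/\lambda$, the output is $\tilde d(s,v):=\min_j \tilde d_B(s,v)$ (zero-distance nodes keep $0$).

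For correctness, monotonicity of rounding gives $w_B(e)\ge\lambda w(e)$ for every edge, hence $\delta_B(v)\ge\lambda\,\dist_G(s,v)$ and so $\tilde d_B(s,v)\ge\dist_G(s,v)$ in every scale — the required overestimate, preserved under the minimum. For the upper bound, take $v$ with $1\le\dist_G^h(s,v)<\infty$ and let $B$ be the power of two with $\dist_G^h(s,v)\le B\le 2\dist_G^h(s,v)$; this scale is among those considered. The shortest $h$-hop path $\pi^\ast$ of $G$ is also a path with at most $h$ edges in the $w_B$-weighted graph, with $w_B(\pi^\ast)\le\lambda\, w(\pi^\ast)+2h=\lambda\,\dist_G^h(s,v)+2h\le\lambda B+2h=L$; since every $w_B$-weight is $\ge 1$, this path has at most $L$ edges, so $v$ settles by round $L$ with $\delta_B(v)\le\lambda\,\dist_G^h(s,v)+2h$. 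Dividing by $\lambda$ yields $\tilde d_B(s,v)\le\dist_G^h(s,v)+2h/\lambda=\dist_G^h(s,v)+2\epsilon' B\le(1+4\epsilon')\dist_G^h(s,v)=(1+\epsilon)\dist_G^h(s,v)$. For the complexity there are $O(\log(hW))=\tilde O(\log W)$ scales, each costing $O(L)=O(h/\epsilon)$ rounds and one broadcast per node, plus $O(h)$ rounds for the zero-distance pass and a one-time $O(D)$-round setup to build a BFS tree for synchronization and to agree on the number of scales; this totals $\tilde O(h\log W/\epsilon+D)$ rounds and $\tilde O(\log W)$ messages per node.

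The step I expect to be the main obstacle is making the bounded exploration simultaneously \emph{fast} (only $O(h/\epsilon)$ rounds, rather than the $\Theta(\text{weighted depth})$ that naive integer Bellman–Ford incurs once many zero-weight edges are present) and \emph{cheap} ($O(1)$ broadcasts per node per scale, unlike the $\Theta(h)$ messages of $h$-iteration Bellman–Ford). Zero-weight edges are the real source of trouble: they allow weighted distances to be realized only by paths with many hops, which both slows propagation and breaks the ``settle once'' invariant. The remedies above — shifting every rescaled weight up by one to turn the exploration into a genuine bounded Dijkstra, and peeling off the distance-zero nodes with a separate depth-$h$ BFS so that this additive shift cannot spoil the approximation guarantee for them — are the parts I would want to get exactly right; the remaining work is bookkeeping over the $O(\log(hW))$ scales.
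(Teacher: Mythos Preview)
Your proposal is correct and follows essentially the same weight-rounding approach that the paper attributes to Nanongkai: guess the distance at geometrically increasing scales, round edge weights so that the target $h$-hop distance becomes $O(h/\epsilon)$, and run a bounded-depth Dijkstra/BFS in which every node settles and broadcasts once. The only noteworthy difference is in handling zero-weight edges: the paper invokes a single black-box reduction (its Appendix~B) that replaces each weight $w$ by $(1+1/\epsilon)\,n\,w + 1$ before calling the positive-weight algorithm, whereas you instead peel off the $\dist_G^h(s,\cdot)=0$ nodes with a depth-$h$ BFS on zero-weight edges and then add $+1$ inside each scale. Both fixes work; the paper's reduction is cleaner as a reusable lemma, while yours avoids the $\times n$ blow-up in intermediate weights and keeps the whole argument self-contained.
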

Note that in~\cite{Nanongkai14} this algorithm was formulated for positive integer edge weights.
However, the result can be extended to non-negative integer edge weights with a simple reduction, that we provide for completeness in Appendix~\ref{apx:reduction from positive to non-negative weights} (Lemma~\ref{lem:positive to non-negative reduction distributed}).

This approximation algorithm can be leveraged to compute approximate $h$-hop distances from a set of $ k $ sources.
Note however that if we run $ k $ instances of the algorithm of Lemma~\ref{lem:approximate bounded hop distances CONGEST}, then in the worst-case it might happen that a single node might have to send different messages to its neighbors in all $ k $ instances of the algorithm simultaneously.
The naive approach of circumventing this type of congestion will blow up the number of rounds by a factor of $ k $.
Nanongkai~\cite{Nanongkai14} showed that the congestion can be avoided using a random delay technique.
\begin{lemma}[\cite{Nanongkai14}]\label{lem:approximate bounded hop distances multiple sources CONGEST}
In the Broadcast CONGEST model, there is a randomized algorithm that, given a directed graph $ G $ with non-negative integer edge weights, a fixed set of source nodes $ S $ of size $ k = |S| $, and an integer hop parameter $ h \geq 1 $, computes, for every source node $ s \in S $ and every node $ v $,  a distance estimate $ \tilde d (s, v) $ ultimately known to $ v $ such that, with high probability,
\begin{equation*}
\dist_G (s, v) \leq \tilde{d} (s, v) \leq (1 + \epsilon) \dist_G^h (s, v)
\end{equation*}
in $ \tilde O (h \log{W} / \epsilon + k + D) $ rounds.
\end{lemma}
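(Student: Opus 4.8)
The plan is to run the $k$ single-source routines of Lemma~\ref{lem:approximate bounded hop distances CONGEST} -- one for each source $s\in S$ -- ``in parallel'' on a common but randomly shifted time line, so that the per-round broadcast congestion at every node stays polylogarithmic. Recall that a single instance, executed in isolation, finishes within $T:=\tilde O(h\log W/\epsilon+D)$ rounds and, crucially, causes every node to broadcast only $M:=\tilde O(\log W)$ messages \emph{in total}; moreover, for any fixing of that routine's internal randomness, the set of time offsets at which it forces a given node to broadcast is determined and has size at most $M$. First I would do a preprocessing phase: build one BFS tree of the communication network rooted at an arbitrary node and, via a single aggregation along it, let every node learn $k=|S|$ and a $2$-approximation of $D$; this costs $\tilde O(k+D)$ rounds and $O(1)$ broadcasts per node and is absorbed into the claimed bound, and the tree is then shared by all instances (so the ``$+D$'' global-coordination steps inside the single-source routine are done on this one tree). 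Each source $s$ then draws a delay $\delta_s$ independently and uniformly from $\{0,1,\dots,k-1\}$, and we consider the \emph{virtual schedule} of length $T+k$ in which instance $s$ performs its $t$-th step at virtual round $\delta_s+t$; every message of instance $s$ is tagged with $s$ and its virtual-round number, so that a node joining instance $s$ (which, BFS-style, only happens once it first hears an instance-$s$ message) implicitly learns $\delta_s$.

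Next I would bound the congestion of the virtual schedule. Fix a node $v$ and a virtual round $r$ and let $X_{v,r}$ be the number of instances that require $v$ to broadcast at virtual round $r$. Instance $s$ forces $v$ to broadcast at a fixed set of at most $M$ offsets, hence contributes to $X_{v,r}$ precisely when $r-\delta_s$ is one of those offsets, an event of probability at most $M/k$ over the uniform choice of $\delta_s$. Therefore $\mathbb E[X_{v,r}]\le k\cdot M/k=M=\tilde O(\log W)$, and since the $\delta_s$ are independent, a Chernoff bound together with a union bound over the $n$ nodes and the $\poly(n)$ many virtual rounds shows that, with high probability, $X_{v,r}=\tilde O(\log W)$ simultaneously for all $v$ and $r$ (here I use the standing assumption $\log W=O(\log n)$, or simply replace $M$ by $M+\Theta(\log n)$).

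The simulation is then routine: implement each virtual round by a fixed block of $\Theta(\log n)$ real rounds in which every node broadcasts, one after another, its (at most $X_{v,r}$) messages scheduled for that virtual round, each tagged as above; neighbors collect them, and no node advances to the next block until the common round counter fixed in preprocessing says so. Conditioned on the high-probability congestion bound, this faithfully reproduces the execution of all $k$ instances, so by Lemma~\ref{lem:approximate bounded hop distances CONGEST} every node $v$ ends up knowing, for every $s\in S$, an estimate $\tilde d(s,v)$ with $\dist_G(s,v)\le\tilde d(s,v)\le(1+\epsilon)\dist_G^h(s,v)$. The running time is the virtual-schedule length times the per-virtual-round blow-up, i.e. $(T+k)\cdot\tilde O(\log W)=\tilde O(h\log W/\epsilon+k+D)$; if some node ever exceeds its block length we abort, which by the above happens with only polynomially small probability, giving the stated ``with high probability'' guarantee.

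I expect the only genuinely delicate point to be the congestion/concentration step -- in particular choosing the delay window of size $\Theta(k)$: it must be small enough that the extra additive term is only $+k$, yet large enough, relative to the light per-instance message budget $M=\tilde O(\log W)$, that the expected per-round load, and hence (by Chernoff) the worst-case load, is polylogarithmic. Everything else -- the shared BFS tree, learning $k$ and $D$, the message tagging, the block-wise simulation, and invoking the correctness of the single-source routine verbatim -- is standard bookkeeping.
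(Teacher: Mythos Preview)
Your proposal is correct and follows exactly the approach the paper points to: the paper does not prove this lemma itself but attributes it to~\cite{Nanongkai14}, noting only that ``the congestion can be avoided using a random delay technique,'' which is precisely the scheme you describe (independent uniform delays in $\{0,\dots,k-1\}$, Chernoff on the per-round load using the $\tilde O(\log W)$ per-instance message budget from Lemma~\ref{lem:approximate bounded hop distances CONGEST}, and block-wise simulation). Your handling of the shared BFS tree for the global-coordination steps and the message tagging to propagate $\delta_s$ are the standard implementation details, and your identification of the congestion/concentration step as the only non-bookkeeping ingredient is accurate.
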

Note that, in combination with the aforementioned rounding technique, one can alternatively use the deterministic source detection algorithm of Lenzen and Peleg~\cite{LenzenP13,LenzenP14} for this task.

\subsubsection{Randomized Hitting Set Construction}

Similar to previous randomized shortest path algorithms, we will use the fact that, for a specified parameter~$ h$, one can by a simple randomized process obtain a set $ C $ of size $ \tilde O (n/h) $ such that every shortest path consisting of $ h $ nodes contains a node of $ C $ with high probability, i.e., $ C $ is a \emph{hitting} set for the system of sets defined by the shortest paths with $ h $ nodes.
This technique was introduced to the design of graph algorithms by Ullman and Yannakakis~\cite{UllmanY91}.
A general lemma can be formulated as follows.
\begin{lemma}\label{lem:randomized hitting set}
Let $ c \geq 1 $, let $ U $ be a set of size $ s $, and let $ \mathcal{S} = \{ S_1, S_2, \ldots, S_k \} $ be a collection of sets over the universe $ U $ of size at least $ q $.
Let~$ T $ be a subset of $ U $ that was obtained by choosing each element of~$ U $ independently with probability $ p = \min (x / q, 1) $ where $ x = c \ln{(k s)} + 1 $.
Then, with high probability (whp), i.e., probability at least $ 1 - 1/s^c $, the following two properties hold:
\begin{enumerate}
\item For every $ 1 \leq i \leq k $, the set $ S_i $ contains an element of $ T $, i.e., $ S_i \cap T \neq \emptyset $.
\item $ |T| \leq 3 x s / q = O (c s \log{(k s)} / q) $.
\end{enumerate}
\end{lemma}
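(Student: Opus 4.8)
The statement is a standard hitting-set (aka "Ullman–Yannakakis") lemma, so the plan is a routine union bound over the $k$ sets together with a Chernoff bound for the size of $T$. First I would handle Property~1. Fix an index $i$ with $1 \le i \le k$. Since $|S_i| \ge q$, pick any $q$ elements of $S_i$; the probability that \emph{none} of these $q$ elements lands in $T$ is $(1-p)^q$. If $p = 1$ this probability is $0$ and we are done, so assume $p = x/q < 1$. Then $(1-p)^q \le e^{-pq} = e^{-x} = e^{-(c\ln(ks)+1)} = \tfrac{1}{e}\,(ks)^{-c}$. Taking a union bound over all $k$ choices of $i$, the probability that some $S_i$ misses $T$ is at most $k \cdot \tfrac{1}{e}(ks)^{-c} \le (ks)^{-c}\cdot k = \tfrac{1}{e} k^{1-c} s^{-c} \le \tfrac{1}{2}\, s^{-c}$, using $k \ge 1$ and $c \ge 1$ (so $k^{1-c} \le 1$). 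This is comfortably below the claimed failure bound; I would keep the constants loose so there is room to also absorb the failure probability of Property~2.

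Next, Property~2. The size $|T| = \sum_{u \in U} X_u$ is a sum of $s$ independent Bernoulli$(p)$ indicators, so $\mathbb{E}[|T|] = ps \le (x/q)s = xs/q$ (and if $p=1$ then $|T| = s \le xs/q$ trivially since $x \ge 1 \ge q$ would be needed — actually when $p=1$ we have $x \ge q$, so $xs/q \ge s = |T|$, fine; and then $3xs/q \ge 3s$, so the bound holds deterministically). Assuming $p < 1$: by a multiplicative Chernoff bound, $\Pr[|T| > 3\mu] \le \Pr[|T| \ge 2e\mu]$ is exponentially small; more concretely, $\Pr[|T| \ge 3\mu] \le e^{-\mu}$ whenever $\mu = ps \ge 1$. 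Here $\mu = ps$, and I should check $\mu \ge 1$: if $ps < 1$ then $|T| \le ps\cdot(\text{something}) $ — better, if $\mu < 1$, then $3xs/q \ge 3\mu$ may be less than $1$, so I instead argue $\Pr[|T| \ge 3xs/q] = \Pr[|T| \ge 1] \le \mathbb{E}[|T|] = \mu$ by Markov only if $3xs/q \le 1$... this edge case needs a tiny bit of care. The clean route: note $x \ge c\ln(ks) + 1 \ge \ln s + 1$, so $xs/q \ge (\ln s)\cdot(s/q) \ge \ln s$ whenever $q \le s$ (which holds since $\mathcal S$ consists of subsets of $U$, so $q \le s$). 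Thus $\mu = ps = xs/q \ge \ln s \ge 1$ for $s \ge 3$ (and small $s$ can be checked directly), and the Chernoff bound gives $\Pr[|T| > 3xs/q] \le e^{-xs/q} \le e^{-x} \le (ks)^{-c} \le s^{-c}$, even smaller than needed; combining with Property~1 via a final union bound yields total failure probability at most $\tfrac{1}{2}s^{-c} + s^{-c} \le \tfrac{3}{2} s^{-c}$, which I'd tighten by using a slightly sharper constant (e.g. replacing the $3$ in $3x$-bound analysis, or noting the Property-1 bound is $\tfrac1e k^{1-c}s^{-c}$) to land at the stated $s^{-c}$.

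The only genuine subtlety — the "main obstacle," though it is minor — is bookkeeping the constants so that the two failure events together fit under the single budget $s^{-c}$, and handling the degenerate regimes ($p = 1$, or $s$ and $q$ tiny, or $\mu < 1$) where the clean Chernoff/union-bound estimates need a one-line Markov-inequality fallback instead. I would state the Chernoff bound I use explicitly (multiplicative form: for independent $[0,1]$ variables with mean $\mu$, $\Pr[X \ge (1+\delta)\mu] \le e^{-\delta^2\mu/3}$ for $\delta \in [0,1]$ and $\Pr[X \ge (1+\delta)\mu] \le e^{-\delta\mu/3}$ for $\delta \ge 1$, so $\delta = 2$ gives $e^{-2\mu/3}$), so that the arithmetic is transparent and the reader can see all constants are generous. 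No clever idea is required beyond the union bound plus concentration; the lemma is infrastructure.
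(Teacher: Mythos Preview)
The paper does not actually prove this lemma; it is stated as a standard tool (attributed to Ullman and Yannakakis~\cite{UllmanY91}) and used as a black box. So there is no proof in the paper to compare against.

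Your approach is the standard one and is correct. One remark on the constant bookkeeping you worried about at the end: you can close the gap cleanly without further tightening. For Property~1 you in fact obtain failure probability at most $\tfrac{1}{e}\,s^{-c}$ (not just $\tfrac{1}{2}\,s^{-c}$), and for Property~2, since $q\le s$ gives $xs/q\ge x$, the Chernoff bound yields $e^{-xs/q}\le e^{-x}=\tfrac{1}{e}(ks)^{-c}\le \tfrac{1}{e}\,s^{-c}$ as well. The two failure events together cost at most $\tfrac{2}{e}\,s^{-c}<s^{-c}$, so the stated bound holds as is. The degenerate case $p=1$ is handled trivially (both properties hold deterministically), and the case $\mu<1$ cannot arise once you use $q\le s$ and $x\ge 1$, so the edge-case discussion can be dropped.
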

To apply the lemma for hitting shortest paths consisting of $ h $ nodes in a graph $ G = (V, E, w) $, set $ U = V $, $ q = h $, and for every pair of nodes $ u $ and $ v $ such that there is shortest path from~$ u $ to~$ v $ with exactly $ h $~nodes define a corresponding set $ S_i $ containing all the nodes on one of these shortest paths, resulting in $ k \leq n^2 $.

\section{Schematic Auxiliary Algorithm}\label{sec:auxiliary algorithm}

In the following, we present an \emph{auxiliary algorithm} that computes distance estimates satisfying the preconditions~\eqref{eq:reduction approximation} and~\eqref{eq:reduction domination} of Theorem~\ref{thm:approximate to exact reduction distributed} and can thus be extended to an exact SSSP algorithm using the recursive-scaling approach.
We formulate the auxiliary algorithm in a schematic manner and defer model-specific implementation details and the complexity analysis to later sections.
The auxiliary algorithm is parameterized by an integer parameter $ h $ whose relevance will only become clear in the complexity analysis.
Formally, the guarantees obtained in this section can be summarized as follows.
\begin{lemma}[Main Lemma]\label{lem:correctness auxiliary algorithm}
For any directed input graph $ G = (V, E, w_G) $ with non-negative integer edge weights and a fixed source node $ s $ and any integer $ 1 \leq h \leq n $, the auxiliary algorithm below consisting of \textbf{Steps 1--6} below computes, for every node $ v \in V $, a distance estimate $ \hat{\dist} (s, v) $ satisfying conditions~\eqref{eq:reduction approximation} and~\eqref{eq:reduction domination} of Theorem~\ref{thm:approximate to exact reduction distributed}.
\end{lemma}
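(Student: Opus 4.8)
The plan is to unfold \textbf{Steps 1--6} --- which I take to be, schematically: \textbf{(1)} sample via Lemma~\ref{lem:randomized hitting set} a set $C \ni s$ of $\tilde O(n/h)$ ``skeleton'' nodes hitting every shortest path on $h$ nodes; \textbf{(2)} compute, via Lemma~\ref{lem:approximate bounded hop distances multiple sources CONGEST}, $(1+\epsilon)$-approximate $h$-hop distance estimates $\tilde{\dist}(x,v)$ from all $x \in C$ to all $v \in V$; \textbf{(3)} form the skeleton graph $H$ on vertex set $C$ with edge $(x,y)$ of weight $\tilde{\dist}(x,y)$; \textbf{(4)} recursively run the exact SSSP routine of Theorem~\ref{thm:approximate to exact reduction distributed} on $H$ from $s$; \textbf{(5)} set $y(v) := \min_{x \in C}\bigl(\dist_H(s,x) + \tilde{\dist}(x,v)\bigr)$ for every $v$; and \textbf{(6)} obtain $\hat{\dist}(s,\cdot)$ by running $O(h)$ Bellman--Ford relaxation rounds in $G$ (cf.\ Lemma~\ref{lem:Bellman Ford CONGEST}) seeded with the values $y(\cdot)$, up to the weight-scaling bookkeeping needed to land on the underestimate side of \eqref{eq:reduction approximation} --- and then verify \eqref{eq:reduction approximation} and \eqref{eq:reduction domination} directly. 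The base case $h \ge n$ is trivial: then $C$ may be taken to be $V$ and $n$ rounds of Bellman--Ford compute $\dist_G(s,\cdot)$ exactly.

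For the domination inequality \eqref{eq:reduction domination}, I would exhibit $\hat{\dist}(s,\cdot)$ as the exact distance function $\dist_{G'}(s,\cdot)$ of an auxiliary graph $G'$ with $E \subseteq E(G')$ and $w_{G'}(e) \le w_G(e)$ for every $e \in E$; the observation recorded right after Theorem~\ref{thm:approximate to exact reduction distributed} then yields \eqref{eq:reduction domination} for free, since $\dist_{G'}(s,v) \le \dist_{G'}(s,u) + w_{G'}(u,v) \le \hat{\dist}(s,u) + w_G(u,v)$. Concretely $G'$ is a (down-scaled) copy of $G$ --- so that every original edge is still relaxed by Step~6 --- augmented with the ``source shortcuts'' $(s,v)$ of weight $y(v)$ and the skeleton shortcuts implicit in Step~5. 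What has to be checked here is that the \emph{bounded}, $O(h)$-round Bellman--Ford pass of Step~6 actually reaches the Bellman--Ford fixed point of $G'$, i.e.\ that $\dist_{G'}(s,\cdot)$ is attained by paths of at most $O(h)$ hops. This is exactly where Lemma~\ref{lem:randomized hitting set} is used: any shortest path of $G$ from $s$ on more than $h$ nodes contains a node $x \in C$ among its last $h$ nodes, so it splits as a prefix of cost $\dist_G(s,x)$ --- which the recursive call of Step~4 together with the estimates of Step~2 already certifies through the shortcut $(s,\cdot)$ --- followed by a suffix of at most $h$ original hops, making $\dist_{G'}(s,\cdot)$ an $O(h)$-hop quantity.

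For the approximation \eqref{eq:reduction approximation}, the upper bound $\hat{\dist}(s,v) \le \dist_G(s,v)$ is immediate from the supergraph description, since $G'$ only adds edges and never increases the weight of an edge of $E$, so $\dist_{G'}(s,v) \le \dist_G(s,v)$ (with the down-scaling chosen so that this remains true after rescaling back). The lower bound $\hat{\dist}(s,v) \ge \tfrac{1}{2}\dist_G(s,v)$ is the quantitative heart of the proof; I would establish it by induction on the recursion depth. The shortcut $(s,v)$ of weight $y(v)$ never undershoots $\dist_G(s,v)$ by more than the error accumulated from one application of Step~2 composed with the inductive guarantee of the recursive call on $H$, and since the skeleton edges only \emph{over}-estimate $h$-hop distances, $y(v)$ lies within a $(1+\epsilon)$-factor of $\dist_G(s,v)$ --- here again the hitting-set guarantee is what ensures the true shortest path decomposes into $\le h$-hop segments so that $y(v)$ tracks $\dist_G(s,v)$ rather than being an unrelated bound; the residual loss comes only from down-scaling the weights, which over an $O(h)$-hop path is additive and is calibrated (together with $\epsilon = \Theta(1/\polylog n)$) to stay below $\tfrac{1}{2}\dist_G(s,v)$. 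Nodes at ``small'' distance, for which this additive slack would be too coarse, are pinned down exactly by the bounded exact Bellman--Ford of Step~6, and zero-weight edges are removed beforehand via the positive-to-non-negative reduction (Lemma~\ref{lem:positive to non-negative reduction distributed}).

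I expect the main obstacle to be precisely Step~6 carrying a double load: it must be cheap --- only $O(h)$ rounds --- and at the same time it must \emph{certify} the triangle inequality \eqref{eq:reduction domination} for \emph{every} edge of $G$, not merely for edges on some shortest path and not merely in the weak sense of a good numerical estimate. Reconciling these forces one to prove that the augmented graph $G'$ has $s$-to-everything hop-length $O(h)$ --- which is true only because the recursion has already resolved the ``long-range'' part of the distances on the skeleton and the hitting set guarantees that every remaining shortest path can be completed within $h$ hops --- while keeping the weight down-scaling coarse enough to make $G'$ genuinely low-hop yet fine enough not to break the factor-$2$ lower bound. The rest --- unrolling the recursion, the hitting-set union bound, and the one-sided $(1+\epsilon)$ accounting from Lemma~\ref{lem:approximate bounded hop distances multiple sources CONGEST} --- is then routine.
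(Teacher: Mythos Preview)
Your plan is close in spirit to the paper's, and you have correctly isolated the crux: Step~6 must be only $O(h)$ rounds yet still output a quantity that \emph{is} an exact distance in some supergraph $G'$ of $G$, so that~\eqref{eq:reduction domination} comes for free. But the argument you give for why the $O(h)$-round Bellman--Ford reaches the fixed point of $G'$ does not go through. You reason about a shortest path in $G$: it has a skeleton node $x$ in its last $h$ positions, so ``shortcut to $x$ plus $\leq h$ original hops'' is an $O(h)$-hop path in $G'$. That exhibits \emph{some} short path in $G'$, but not one of weight $\dist_{G'}(s,v)$. Concretely, with your $G'$ (original edges kept, shortcuts $(s,v)$ of weight $y(v)$), one has $y(v)\geq\dist_G(s,v)$ for every $v$, so the shortcuts never beat the original paths and $\dist_{G'}(s,v)=\dist_G(s,v)$; but your $O(h)$-hop path has weight $y(x)+\dist_G(x,v)$, which equals $\dist_G(s,v)$ only when $y(x)=\dist_G(s,x)$ exactly --- i.e.\ when neither the skeleton-graph distances nor the $(1+\epsilon)$-estimates incur any slack. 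In general they do, and then $\dist_{G'}^h(s,v)>\dist_{G'}(s,v)$, so the domination property fails. This is precisely the obstruction illustrated by the paper's Figure~\ref{fig:approximation counterexample}.

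The fix in the paper is not a vague ``down-scaling'' of $G$ but the opposite, and it is the main idea you are missing: in $G'$ the original edges are \emph{doubled} to weight $2\,w_G(e)$, the shortcuts $(s,x)$ are added \emph{only for $x\in C$} with weight $\dist_H(s,x)$, and one returns $\hat{\dist}(s,v)=\tfrac{1}{2}\dist_{G'}^h(s,v)$. The factor $2$ is chosen to match the $2$-approximation of Step~2, so that between consecutive skeleton nodes $y_i,y_{i+1}$ on a shortest $G$-path one has $w_H(y_i,y_{i+1})=\tilde{\dist}(y_i,y_{i+1})\leq 2\,\dist_G(y_i,y_{i+1})$, the same as the cost of that segment under the doubled edge weights. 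The key lemma (Lemma~\ref{lem:triangle inequality first auxiliary algorithm distributed}) then starts from an arbitrary shortest path $\pi'$ \emph{in $G'$} (not in $G$), writes its first edge as a shortcut $(s,x)$, telescopes $\dist_H(s,y_k)\leq\dist_H(s,x)+\sum_i w_H(y_i,y_{i+1})$ through the triangle inequality in $H$, and thereby produces a path $\pi''$ with at most $h$ edges whose weight is at most that of $\pi'$. Neither the ``induction on recursion depth'' nor the ``additive loss over an $O(h)$-hop path'' picture is what drives the argument; what matters is the exact matching of the multiplicative scaling to the approximation ratio, together with the telescoping via $\dist_H$.
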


The auxiliary algorithm proceeds as follows:
\begin{enumerate}[align=left,label=\textbf{Step \arabic*.},ref=\textbf{Step \arabic*}]
\item \label{step:construct hitting set}
Construct a set $ C \subseteq V $ of \emph{skeleton nodes} containing (a) the source node $ s $ and (b) additionally, for every pair of nodes $ u $ and $ v $ such that the shortest path from $ u $ to $ v $ in $ G $ consists of exactly $ \lceil h/2 \rceil $ nodes, at least one node on one of these shortest paths.

\item \label{step:compute approximate hop distances}
For each skeleton node $ x \in C $, compute $ 2 $-approximate $h$-hop distances from $ x $, i.e., distance estimates $ \tilde{\dist} (x, \cdot) $ such that
\begin{equation}
\dist_G (x, v) \leq \tilde{\dist} (x, v) \leq 2 \cdot \dist_G^h (x, v) \label{eq:upper and lower bound weighted BFS}
\end{equation}
for every node $ v \in V $.

\item \label{step:construct skeleton graph}
Construct the \emph{skeleton graph} $ H = (C, C^2, w_H) $ with edge weight $ w_H (x, y) = \tilde{\dist} (x, y) $ for every $ (x, y) \in C^2 $.

\item \label{step:exact SSSP on skeleton}
Compute distances from $ s $ on the skeleton graph~$ H $, i.e., $ \dist_H (s, x) $ for every skeleton node $ x \in C $.

\item \label{step:construct augmented graph}
Construct the \emph{augmented graph} $ G' = (V, E \cup \{ s \} \times C, w_{G'}) $ with the weight function given by
\begin{equation*}
w_{G'} (u, v) =
\begin{cases}
\dist_H (u, v) & \text{for every $ (u, v) \in (\{ s \} \times C) $} \setminus E \\
2 \cdot w_G (u, v) & \text{for every $ (u, v) \in E \setminus (\{ s \} \times C) $} \\
\min (\dist_H (u, v), 2 \cdot w_G (u, v)) & \text{for every $ (u, v) \in E \cap (\{ s \} \times C) $ \, .}
\end{cases}
\end{equation*}

\item \label{step:final Bellman-Ford}
Compute the $h$-hop distances from $ s $ in the augmented graph~$ G' $, i.e., $ \dist_{G'}^h (s, v) $ for every node $ v \in V $ and return $ \hat{\dist} (s, v) := \tfrac{1}{2} \cdot \dist_{G'}^h (s, v) $ for every node $ v \in V $.
\end{enumerate}

The correctness proof has two main parts.
We first argue (see Lemma~\ref{lem:approximation first auxiliary algorithm distributed}) that the distances in~$ G' $ are a $ 2 $-approximation of the distances in~$ G $.
Then we show (see Lemma~\ref{lem:triangle inequality first auxiliary algorithm distributed}) that the distance estimates computed by our algorithm, namely $ \hat{\dist} (s, v) := \tfrac{1}{2} \cdot \dist_{G'}^h (s, v) $ for every node $ v \in V $, are proportional to the distances from $ s $ in $ G' $ (independent of any hop bound), i.e., $ \dist_{G'}^h (s, v) = \dist_{G'} (s, v) $.
By combining these two facts (see Lemma~\ref{lem:combining to satisfy conditions}) it follows that the distance estimates returned by the auxiliary algorithm satisfy conditions~\eqref{eq:reduction approximation} and~\eqref{eq:reduction domination} of Theorem~\ref{thm:approximate to exact reduction distributed}.

\begin{lemma}\label{lem:approximation first auxiliary algorithm distributed}
For every node $ v \in V $, $ \dist_G (s, v) \leq \dist_{G'} (s, v) \leq 2 \cdot \dist_G (s, v) $.
\end{lemma}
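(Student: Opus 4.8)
The inequality $\dist_G(s,v) \le \dist_{G'}(s,v)$ is the easy direction: I would show that every edge of $G'$ has weight at least the $G$-distance between its endpoints. For edges $(u,v)\in E\setminus(\{s\}\times C)$ this is clear since $w_{G'}(u,v)=2w_G(u,v)\ge w_G(u,v)\ge \dist_G(u,v)$. For edges $(s,x)$ with $x\in C$, the weight is $\dist_H(s,x)$ (or the minimum of that with $2w_G$), so it suffices to show $\dist_H(s,x)\ge \dist_G(s,x)$; this follows because each skeleton edge weight $w_H(x,y)=\tilde\dist(x,y)\ge \dist_G(x,y)$ by~\eqref{eq:upper and lower bound weighted BFS}, so any path in $H$ corresponds to a walk in $G$ of no greater total weight, hence $\dist_H\ge\dist_G$ on $C\times C$. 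Since every edge of $G'$ dominates the corresponding $G$-distance, any $s$-$v$ path in $G'$ has weight at least $\dist_G(s,v)$ by the triangle inequality, giving $\dist_{G'}(s,v)\ge\dist_G(s,v)$.

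For the upper bound $\dist_{G'}(s,v)\le 2\dist_G(s,v)$, I would exhibit, for each $v$, a cheap $s$-$v$ path in $G'$. Fix a shortest path $\pi$ from $s$ to $v$ in $G$. If $\pi$ has at most $\lceil h/2\rceil$ nodes — in particular at most $h$ hops — then I would use the fact (to be shown, cf.\ the intended Lemma~\ref{lem:triangle inequality first auxiliary algorithm distributed}, but provable directly here) that doubling all original edge weights gives $\dist_{G'}(s,v)\le \sum_{e\in\pi}2w_G(e)=2\dist_G(s,v)$. The substantive case is when $\pi$ is long. Then I would break $\pi$ into blocks of $\lceil h/2\rceil$ consecutive nodes (with a short final block). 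By the hitting-set property of $C$ from~\ref{step:construct hitting set}, each full block contains a skeleton node; picking one skeleton node per block, and also noting $s\in C$, I obtain skeleton nodes $s=x_0,x_1,\dots,x_\ell$ along $\pi$ such that consecutive $x_i,x_{i+1}$ are connected in $G$ by a subpath of $\pi$ with at most $h$ hops. Hence $\tilde\dist(x_i,x_{i+1})\le 2\dist_G^h(x_i,x_{i+1})\le 2\,w(\pi[x_i,x_{i+1}])$, so in $H$ we get $\dist_H(s,x_\ell)\le\sum_i w_H(x_i,x_{i+1})\le 2\,w(\pi[s,x_\ell])$. The edge $(s,x_\ell)$ is present in $G'$ with weight $\le \dist_H(s,x_\ell)$, and from $x_\ell$ the remaining short tail of $\pi$ (at most $h$ hops) is traversed in $G'$ at double weight. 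Adding up yields a path in $G'$ of weight $\le 2w(\pi[s,x_\ell])+2w(\pi[x_\ell,v])=2\dist_G(s,v)$.

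The main obstacle is the bookkeeping of hop counts in the block decomposition: I must be careful that the subpaths of $\pi$ joining consecutive chosen skeleton nodes, and the final tail from the last skeleton node to $v$, each have at most $h$ hops (so that the $h$-hop distance estimates and the final $h$-hop Bellman--Ford step of~\ref{step:final Bellman-Ford} actually apply). Choosing the skeleton node inside each block of $\lceil h/2\rceil$ nodes means two consecutive chosen nodes span at most two blocks, i.e.\ at most $2\lceil h/2\rceil\le h+1$ nodes, hence at most $h$ edges — here one should check the off-by-one carefully, possibly by choosing the \emph{last} skeleton node available within a window, which is the standard fix. The tail is at most one block, well within $h$. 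A secondary point is that this lemma is stated for $\dist_{G'}$ rather than $\dist_{G'}^h$, so for the full upper bound argument I only need that such a path \emph{exists} in $G'$; the sharper claim $\dist_{G'}^h(s,v)=\dist_{G'}(s,v)$ is handled separately in Lemma~\ref{lem:triangle inequality first auxiliary algorithm distributed}, so I would not worry about hop bounds of the final $G'$-path here, only about the hop bounds feeding into the estimates $\tilde\dist(\cdot,\cdot)$.
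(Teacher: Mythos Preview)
Your lower-bound argument ($\dist_{G'}(s,v)\ge\dist_G(s,v)$) matches the paper's proof exactly: every edge of $G'$ has weight at least the $G$-distance between its endpoints, and for the skeleton edges this reduces to $\dist_H(s,x)\ge\dist_G(s,x)$, which follows from $w_H(x,y)=\tilde\dist(x,y)\ge\dist_G(x,y)$.

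Your upper-bound argument is correct but substantially more elaborate than necessary. The paper observes that the inequality $\dist_{G'}(s,v)\le 2\dist_G(s,v)$ is immediate: every edge of $G$ is also an edge of $G'$, and by the definition of $w_{G'}$ its weight there is at most $2w_G(e)$. Hence \emph{any} shortest path $\pi$ from $s$ to $v$ in $G$---regardless of its hop length---is a path in $G'$ of weight at most $2w_G(\pi)=2\dist_G(s,v)$. There is no ``substantive case'' when $\pi$ is long; the block decomposition, the hitting-set property of $C$, and the bounds on $\tilde\dist(\cdot,\cdot)$ are all unnecessary here. You have essentially re-derived the machinery that the paper deploys only in Lemma~\ref{lem:triangle inequality first auxiliary algorithm distributed}, where the goal is the sharper statement $\dist_{G'}^h(s,v)=\dist_{G'}(s,v)$ and one genuinely needs to produce a \emph{short} path in $G'$. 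For the present lemma, which concerns $\dist_{G'}$ with no hop restriction, the one-line doubling argument suffices for paths of every length.
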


\begin{proof}
First, observe that by the definition of $ G' $ we clearly have $ \dist_{G'} (s, v) \leq 2 \cdot \dist_G (s, v) $ for every node $ v \in V $ as every path of $ G $ is also contained in $ G' $ and the weight of such a path in $ G' $ is at most twice its weight in $ G $ by the definition of the edge weights in $ G' $.

We now show that $ \dist_{G'} (s, v) \geq \dist_G (s, v) $ for every node $ v \in V $.
Observe that it is sufficient to show that $ w_{G'} (u, v) \geq \dist_G (u, v) $ for every edge $ (u, v) $ of $ G' $ as then every path from $ s $ to $ v $ in $ G' $ has weight at least $ \dist_G (s, v) $, which in particular also applies to the shortest path from $ s $ to $ v $ in $ G' $.
Note that by the definition of the edge weights in $ G' $ we now only have to show that $ \dist_H (s, x) \geq \dist_G (s, x) $ for every skeleton node $ x \in C $.
Now observe that for every pair of skeleton nodes $ x, y \in C $ we have $ \tilde \dist (x, y) \geq \dist_G (x, y) $ by~\eqref{eq:upper and lower bound weighted BFS} and thus $ w_H (x, y) \geq \dist_G (x, y) $.
This in turn implies that $ \dist_H (s, x) \geq \dist_G (s, x) $ for every skeleton node $ x \in C $, as desired.
\end{proof}

\begin{lemma}\label{lem:triangle inequality first auxiliary algorithm distributed}
For every node $ v \in V $, $ \dist_{G'}^h (s, v) = \dist_{G'} (s, v) $.
\end{lemma}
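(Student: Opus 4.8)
The plan is to prove the nontrivial inequality $\dist_{G'}^h(s,v)\le\dist_{G'}(s,v)$ by exhibiting, for each node $v$, a path in $G'$ from $s$ to $v$ with at most $h$ edges whose $G'$-weight is at most $\dist_{G'}(s,v)$ (the reverse inequality is immediate, since every $h$-hop path is a path). First I would record the structural fact that, since $s$ occurs only once on a shortest $s$-$v$ path $P$ of $G'$, every edge of $P$ except possibly the first is an original edge of $E$ (carrying weight $2w_G(\cdot)$), so $P=(s,u_1,\dots,u_\ell=v)$ where the suffix $(u_1,\dots,u_\ell)$ realizes $\dist_G(u_1,v)$ in $G$ (otherwise replace it, contradicting optimality of $P$) and the first edge contributes a weight $D_1:=w_{G'}(s,u_1)$ with $\dist_G(s,u_1)\le D_1\le 2\dist_G(s,u_1)$ (using optimality of $P$ and Lemma~\ref{lem:approximation first auxiliary algorithm distributed}; the case $\dist_{G'}(s,v)=2\dist_G(s,v)$ is subsumed by taking $P$ to be a shortest $s$-$v$ path of $G$ with its weights doubled). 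If the suffix has at most $h-1$ edges we are done with $P$ itself; otherwise it has at least $h$ edges, and the point is to ``shortcut'' it.

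The key ingredient, which I would isolate as a preliminary claim, is that the skeleton graph preserves distances between skeleton nodes up to a factor of $2$: for all $a,b\in C$, $\dist_G(a,b)\le\dist_H(a,b)\le 2\dist_G(a,b)$. The lower bound follows from $w_H\ge\dist_G$ on $C\times C$ (by \eqref{eq:upper and lower bound weighted BFS}) together with the triangle inequality for $\dist_G$, exactly as in the proof of Lemma~\ref{lem:approximation first auxiliary algorithm distributed}. For the upper bound I would argue by induction on the pair consisting of $\dist_G(a,b)$ and the least number of edges on a shortest $a$-$b$ path, ordered lexicographically: if some shortest $a$-$b$ path has at most $h$ edges, then $\dist_H(a,b)\le w_H(a,b)=\tilde\dist(a,b)\le 2\dist_G^h(a,b)=2\dist_G(a,b)$ directly; otherwise, walk along a shortest $a$-$b$ path and invoke \ref{step:construct hitting set} to produce a skeleton node $x\in C$ that lies on a shortest $a$-$b$ path, satisfies $\dist_G^h(a,x)=\dist_G(a,x)$ (so $\dist_H(a,x)\le w_H(a,x)\le 2\dist_G(a,x)$), and for which the same lexicographic measure, now for the pair $(x,b)$, is strictly smaller, whence $\dist_H(a,b)\le\dist_H(a,x)+\dist_H(x,b)\le 2\dist_G(a,x)+2\dist_G(x,b)=2\dist_G(a,b)$. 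I expect this induction to be the main obstacle: the hitting-set guarantee in \ref{step:construct hitting set} only promises that \emph{some} shortest path of $\lceil h/2\rceil$ nodes between each pair is hit, so turning it into ``a skeleton node on a shortest path that is close in hops and makes the measure decrease'' — and in particular guaranteeing strict progress when zero-weight edges are present, so that one does not keep landing on $a$ itself — requires care; this is precisely where the secondary edge-count part of the measure earns its keep.

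Granting the claim, I would finish as follows. Let $a$ be the node $\lceil h/2\rceil-1$ edges before $v$ along the suffix of $P$; applying the hitting-set consequence to the pair $(a,v)$ gives a skeleton node $x\in C$ on a shortest $a$-to-$v$ path with $\dist_G^{\lceil h/2\rceil-1}(x,v)=\dist_G(x,v)$ and, since $a$ lies on a shortest $u_1$-$v$ path, $\dist_G(u_1,x)+\dist_G(x,v)=\dist_G(u_1,v)$. Now take the path that uses the shortcut edge $(s,x)$ (present because $x\in C$) followed by the $\le\lceil h/2\rceil-1\le h-1$ edges of a shortest $x$-$v$ path in $G$: it has at most $h$ edges and $G'$-weight at most $\dist_H(s,x)+2\dist_G(x,v)$. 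It then remains to check $\dist_H(s,x)\le D_1+2\dist_G(u_1,x)$, which yields total weight $\le D_1+2\dist_G(u_1,v)=w_{G'}(P)=\dist_{G'}(s,v)$, as needed: when $D_1=\dist_H(s,u_1)$ — the case in which the first edge of $P$ is, or may be taken to be, the shortcut, in particular whenever $u_1\in C$ — this is the triangle inequality in $H$ together with $\dist_H(u_1,x)\le 2\dist_G(u_1,x)$ from the claim; and when $D_1=2\dist_G(s,u_1)$ it follows from $\dist_H(s,x)\le 2\dist_G(s,x)\le 2\dist_G(s,u_1)+2\dist_G(u_1,x)$.
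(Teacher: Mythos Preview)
Your overall route is the same as the paper's: exhibit, for each $v$, a path in $G'$ of at most $h$ edges by taking the shortcut edge $(s,y)$ to a skeleton node $y$ close to $v$ and then a short $G$-tail, and bound its weight against the original shortest $G'$-path $P$. The paper carries out the bound on $w_{G'}(s,y_k)\le\dist_H(s,y_k)$ \emph{inline}: it subdivides the shortest $x$--$v$ path in $G$ into consecutive blocks of $\lceil h/2\rceil$ nodes, picks one skeleton node $y_i$ per block, and telescopes $\dist_H(s,y_k)\le w_H(s,y_1)+\sum_i w_H(y_i,y_{i+1})\le w_H(s,x)+2\dist_G(x,y_k)$. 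You instead isolate the reusable statement $\dist_H(a,b)\le 2\dist_G(a,b)$ for all $a,b\in C$ and then apply it twice (once for $(u_1,x)$, once for $(s,x)$); this is a clean factorization, and your final two-case bound on $\dist_H(s,x)$ is correct.

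Where the proposal is shaky is the inductive proof of that claim. Your lexicographic measure does not by itself rule out $x=a$: applying Step~1 to a pair whose left endpoint is $a$ may return $a$ itself (since $a\in C$ lies on the relevant $\lceil h/2\rceil$-node path), and then neither coordinate decreases. You flag this as the ``main obstacle'' but do not resolve it. The simple fix is exactly the paper's non-inductive argument: chunk one fixed shortest $a$--$b$ path into blocks of $\lceil h/2\rceil$ nodes, pick one skeleton node per block, observe that consecutive chosen nodes are within $h$ hops, and telescope $\dist_H(a,b)\le\sum_i w_H(y_i,y_{i+1})\le 2\sum_i\dist_G(y_i,y_{i+1})=2\dist_G(a,b)$. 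With the claim established this way, the rest of your argument goes through. (Both your proof and the paper's implicitly use that the sampled set hits the \emph{specific} blocks of the chosen path, not merely ``some'' shortest path per endpoint pair; this is what the random sampling actually guarantees once canonical tie-breaking is fixed, and is a standard gloss.)
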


\begin{proof}
The inequality $ \dist_{G'} (s, v) \leq \dist_{G'}^{h} (s, v) $ is obvious by the definition of the $h$-hop distance from $ s $ to $ v $.
In the remainder of this proof we argue that $ \dist_{G'}^{h} (s, v) \leq \dist_{G'} (s, v) $ by constructing a path in~$ G'$ with at most $ h $~edges and of weight at most $ \dist_{G'} (s, v) $.

Consider a shortest path $ \pi' $ from $ s $ to $ v $ in $ G' $.
We can assume that $ \pi' $ is a simple path and thus $ \pi' $ contains at most one edge $ (s, x) \in (\{ s \} \times C) \setminus E $ for any $ x \in C $, and if so, this edge must be the first edge of $ \pi' $.
We assume in the following that $ \pi' $ starts with some edge $ (s, x) \in (\{ s \} \times C) \setminus E $, as for the case that all edges of $ \pi' $ are contained in $ E $ a simpler version of the argument applies.

Now let $ \pi $ be the shortest path from $ x $ to $ v $ in $ G $.
Subdivide $ \pi $ into consecutive subpaths $ \pi_1, \ldots, \pi_k $ such that the subpaths $ \pi_2, \ldots, \pi_k $ consist of exactly $ \lceil h/2 \rceil $ nodes and the subpath $ \pi_1 $ consists of at most $ \lceil h/2 \rceil $ nodes.
By the properties of $ C $ guaranteed in \ref{step:construct hitting set}, we can assume that each of the subpaths $ \pi_2, \ldots, \pi_k $ contains a skeleton node of $ C $.
Set $ y_1 = x $ and for every $ 2 \leq i \leq k $ let $ y_i $ be a skeleton node on $ \pi_i $.
Note that between any pair of consecutive skeleton nodes $ y_i $ and $ y_{i+1} $ (for $ 1 \leq i \leq k-1 $) there are at most $ h-1 $ edges on $ \pi $ and thus
\begin{equation}
\dist_G^{h} (y_i, y_{i+1}) = \dist_G (y_i, y_{i+1}) \, . \label{eq:centers on subpaths}
\end{equation}

We now give an upper bound on the weight of the edge $ (s, y_k) $ in $ G' $, mainly applying the triangle inequality for the distance metric induced by the skeleton graph $ H $:
\begin{align*}
w_{G'} (s, y_k) &\leq \dist_H (s, y_k)				&& \text{(definition of $ w_{G'} (s, y_k) $)} \\
 &\leq \dist_H (s, y_1) + \dist_H (y_1, y_2) + \dots + \dist_H (y_{k-1}, y_k)				&& \text{(triangle inequality)} \\
 &\leq w_H (s, y_1) + w_H (y_1, y_2) + \dots + w_H (y_{k-1}, y_k)				&& \text{($ \dist_H (y_i, y_{i+1}) \leq w_H (y_i, y_{i+1}) $)} \\
 &= w_H (s, y_1) + \tilde{\dist} (y_1, y_2) + \dots + \tilde{\dist} (y_{k-1}, y_k)				&& \text{(definition of $ w_H (y_i, y_{i+1}) $)} \\
 &\leq w_H (s, y_1) + 2 \cdot \dist_G^h (y_1, y_2) + \dots + 2 \cdot \dist_G^h (y_{k-1}, y_k)				&& \text{(by \eqref{eq:upper and lower bound weighted BFS})} \\
 &= w_H (s, y_1) + 2 \cdot \dist_G (y_1, y_2) + \dots + 2 \cdot \dist_G (y_{k-1}, y_k)				&& \text{(by \eqref{eq:centers on subpaths})} \\
 &= w_H (s, y_1) + 2 \cdot \dist_G (y_1, y_k)				&& \text{($ y_1, \dots, y_k $ on shortest path $ \pi $)} \\
 &= w_H (s, x) + 2 \cdot \dist_G (x, y_k)				&& \text{($ y_1 = x $)}
\end{align*}

Now consider the path $ \pi'' $ in $ G' $ consisting of first the edge from $ s $ to $ y_k $ of weight $ w_{G'} (s, y_k) = \dist_H (s, y_k) $ and then the subpath of $ \pi $ from $ y_k $ to $ v $.
We now compare the weight of $ \pi' $ with the weight of $ \pi'' $.
Recall that $ \pi' $ consists of first the edge $ (s, x) $ and then a path from $ x $ to $ v $ consisting only of edges contained in $ E $.
Therefore, $ \pi' $ has weight at least
\begin{equation*}
w_H (s, x) + 2 \cdot \dist_G (x,v) \, .
\end{equation*}
By the upper bound on $ \dist_H (s, y_k) $ above, $ \pi'' $ has weight at most
\begin{equation*}
w_H (s, x) + 2 \cdot \dist_G (x, y_k) + 2 \cdot \dist_G (y_k, v) = w_H (s, x) + 2 \cdot \dist_G (x, v) \, .
\end{equation*}
It follows that the weight of $ \pi'' $ is at most the weight of $ \pi' $, where $ \pi' $ was the shortest path from $ s $ to $ v $ in~$ G' $.
Furthermore, $ \pi'' $ consist of at most $ h $ edges.
Thus, $ \dist_{G'}^{h} (s, v) \leq \dist_{G'} (s, v) $ as desired.
\end{proof}

\begin{lemma}\label{lem:combining to satisfy conditions}
For every node $ v \in V $, the distance estimate $ \hat{\dist} (s, v) $ satisfies conditions~\eqref{eq:reduction approximation} and~\eqref{eq:reduction domination} of Theorem~\ref{thm:approximate to exact reduction distributed}.
\end{lemma}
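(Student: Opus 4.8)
The plan is to combine the two preceding lemmas directly; there is essentially nothing more to do. First I would use Lemma~\ref{lem:triangle inequality first auxiliary algorithm distributed} to drop the hop bound: since $\dist_{G'}^h (s, v) = \dist_{G'} (s, v)$ for every node $v$, the estimate returned in \ref{step:final Bellman-Ford} satisfies $\hat{\dist} (s, v) = \tfrac{1}{2} \dist_{G'}^h (s, v) = \tfrac{1}{2} \dist_{G'} (s, v)$.

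For condition~\eqref{eq:reduction approximation} I would plug this identity into Lemma~\ref{lem:approximation first auxiliary algorithm distributed}: from $\dist_G (s, v) \leq \dist_{G'} (s, v) \leq 2 \cdot \dist_G (s, v)$, dividing through by $2$ yields $\tfrac{1}{2} \dist_G (s, v) \leq \hat{\dist} (s, v) \leq \dist_G (s, v)$, which is exactly~\eqref{eq:reduction approximation}.

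For condition~\eqref{eq:reduction domination} I would fix an edge $(u, v) \in E$ and observe that $(u, v)$ is also an edge of $G'$ whose weight satisfies $w_{G'} (u, v) \leq 2 \cdot w_G (u, v)$: by the case distinction in \ref{step:construct augmented graph}, either $(u, v) \in E \setminus (\{ s \} \times C)$ and $w_{G'} (u, v) = 2 \cdot w_G (u, v)$, or $(u, v) \in E \cap (\{ s \} \times C)$ and $w_{G'} (u, v) = \min (\dist_H (u, v), 2 \cdot w_G (u, v)) \leq 2 \cdot w_G (u, v)$. Applying the triangle inequality for distances in $G'$ along this edge gives $\dist_{G'} (s, v) \leq \dist_{G'} (s, u) + w_{G'} (u, v) \leq \dist_{G'} (s, u) + 2 \cdot w_G (u, v)$, and halving both sides produces $\hat{\dist} (s, v) \leq \hat{\dist} (s, u) + w_G (u, v)$, i.e.~\eqref{eq:reduction domination}.

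I do not expect a genuine obstacle here; the only points requiring care are bookkeeping ones: tracking the factor $2$ that is introduced by doubling the $G$-edge weights in \ref{step:construct augmented graph} and then undone by the final halving, and noting that the ``shortcut'' edges of $G'$ lying in $(\{ s \} \times C) \setminus E$ are irrelevant to~\eqref{eq:reduction domination} since that condition quantifies only over edges $(u, v) \in E$, all of which are preserved in $G'$ with weight at most $2 \cdot w_G (u, v)$.
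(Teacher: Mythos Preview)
Your proposal is correct and follows essentially the same route as the paper: both arguments use Lemma~\ref{lem:triangle inequality first auxiliary algorithm distributed} to replace $\dist_{G'}^h$ by $\dist_{G'}$, then derive~\eqref{eq:reduction approximation} by halving the sandwich of Lemma~\ref{lem:approximation first auxiliary algorithm distributed}, and derive~\eqref{eq:reduction domination} via the triangle inequality in $G'$ together with $w_{G'}(u,v)\le 2\,w_G(u,v)$ for $(u,v)\in E$. Your explicit case analysis for the latter bound is a bit more verbose than the paper's presentation but otherwise identical in substance.
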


\begin{proof}
By Lemma~\ref{lem:approximation first auxiliary algorithm distributed} we have $ \tfrac{1}{2} \cdot \dist_G (s, v) \leq \tfrac{1}{2} \cdot \dist_{G'} (s, v) \leq \dist_G (s, v) $ for every node $ v \in V $.
Now observe that by Lemma~\ref{lem:triangle inequality first auxiliary algorithm distributed} $ \tfrac{1}{2} \cdot \dist_{G'} (s, v) = \tfrac{1}{2} \cdot \dist_{G'}^h (s, v) = \hat{\dist}_G (s, v) $ and thus condition~\eqref{eq:reduction approximation} is satisfied.

Condition~\eqref{eq:reduction domination} essentially follows from the fact that the distance metric on~$ G' $ obeys the triangle inequality as together with Lemma~\ref{lem:triangle inequality first auxiliary algorithm distributed} we have, for every edge $ (u, v) \in E $:
\begin{align*}
\hat{\dist}_G (s, v) &= \tfrac{1}{2} \cdot \dist_{G'}^h (s, v) \\
 &= \tfrac{1}{2} \cdot  \dist_{G'} (s, v) \\
 &\leq \tfrac{1}{2} \cdot (\dist_{G'} (s, u) + \dist_{G'} (u, v)) \\
 &\leq \tfrac{1}{2} \cdot (\dist_{G'} (s, u) + w_{G'} (u, v)) \\
 &\leq \tfrac{1}{2} \cdot (\dist_{G'} (s, u) + 2 \cdot w_G (u, v)) \\
 &= \tfrac{1}{2} \cdot \dist_{G'} (s, u) + w_G (u, v) \\
 &= \tfrac{1}{2} \cdot \dist_{G'}^h (s, u) + w_G (u, v) \\
 &= \hat{\dist}_G (s, u) + w_G (u, v) \, . \qedhere
\end{align*}
\end{proof}

\section{First CONGEST Model Implementation}\label{sec:first distributed algorithm}

In the following we present our first exact SSSP algorithm for the CONGEST model.
Its guarantees can be formalized as follows.
\begin{theorem}\label{thm:first exact SSSP algorithm distributed}
In the Broadcast CONGEST model, there is a randomized SSSP algorithm for directed graphs with non-negative integer edge weights in the range $ \{ 0, 1, \dots, W \} $ that performs $ \tilde O (\sqrt{n D} \log W) $ rounds in expectation.
\end{theorem}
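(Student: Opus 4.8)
The plan is to combine the schematic algorithm of Lemma~\ref{lem:correctness auxiliary algorithm} with the reduction of Theorem~\ref{thm:approximate to exact reduction distributed}. By that reduction it suffices to give a Monte Carlo (high-probability-correct) Broadcast CONGEST implementation $\mathcal A$ of \ref{step:construct hitting set}--\ref{step:final Bellman-Ford} running in $\tilde O(\sqrt{nD})$ rounds on inputs with edge weights in $\{0,\dots,O(n)\}$: the exact algorithm $\mathcal B$ then makes $O(\log{nW})$ calls to $\mathcal A$ plus an additive overhead of $\tilde O(D\log W)$ broadcast rounds, and since $D\le\sqrt{nD}$ (because $D\le n$) this sums to $\tilde O(\sqrt{nD}\log W)$ rounds in expectation; because the only randomness inside $\mathcal A$ is the hitting set of Lemma~\ref{lem:randomized hitting set} and the multi-source primitive of Lemma~\ref{lem:approximate bounded hop distances multiple sources CONGEST}, $\mathcal A$ is Monte Carlo and hence $\mathcal B$ is Las Vegas. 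Correctness of the output is already supplied by Lemma~\ref{lem:correctness auxiliary algorithm}, so only the round complexity needs analysis. I would run the auxiliary algorithm with hop parameter $h\approx\sqrt{nD}$, so that the skeleton set has size $|C|=\tilde O(n/h)=\tilde O(\sqrt{n/D})$.

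Five of the six steps are straightforward. For \ref{step:construct hitting set} I sample $C$ locally via Lemma~\ref{lem:randomized hitting set} with $q=\lceil h/2\rceil$ (each node joins with probability $\tilde\Theta(1/h)$, and $s$ joins deterministically) and broadcast the resulting $\tilde O(n/h)$ identifiers over a BFS tree of $N$ in $\tilde O(n/h+D)$ rounds. For \ref{step:compute approximate hop distances} I invoke Lemma~\ref{lem:approximate bounded hop distances multiple sources CONGEST} with source set $C$, accuracy $\epsilon=1$ and hop bound $h$; since the weights are $O(n)$ the $\log W$ factor is $\tilde O(1)$ and the cost is $\tilde O(h+|C|+D)=\tilde O(h+n/h+D)$, after which each $y\in C$ knows its incoming skeleton-edge weights $w_H(x,y)=\tilde\dist(x,y)$. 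Step \ref{step:construct skeleton graph} then needs no communication (the skeleton graph $H$ stays distributed over $C$), and \ref{step:construct augmented graph} is free once $\dist_H(s,\cdot)$ is known, since every node knows its own $G$-edges and the values $\dist_H(s,x)$. For \ref{step:final Bellman-Ford} I run the Bellman-Ford routine of Lemma~\ref{lem:Bellman Ford CONGEST} (with the obvious modification to arbitrary initial labels) for $h$ iterations on $(V,E,2w_G)$, initialized with $\ell(s)=0$, $\ell(x)=\dist_H(s,x)$ for $x\in C\setminus\{s\}$ and $\ell(v)=+\infty$ otherwise, in $O(h)$ rounds; a short argument shows this computes $\dist_{G'}^h(s,\cdot)$ exactly, since relaxations along $E$ with doubled weights realize every $G'$-edge outside $\{s\}\times C$, the initial label $\ell(x)$ realizes a first hop along an edge of $\{s\}\times C$, and by Lemma~\ref{lem:triangle inequality first auxiliary algorithm distributed} any residual hop-count slack is harmless.

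The genuine difficulty --- and the step I expect to dominate the effort --- is \ref{step:exact SSSP on skeleton}: computing the exact distances $\dist_H(s,\cdot)$ in the skeleton graph $H$, a complete graph on $|C|=\tilde O(\sqrt{n/D})$ nodes whose (virtual) edges must be realized by broadcasts through $N$. I would solve this by recursively applying the algorithm being constructed to $H$, where a communication round of $H$ is simulated by a pipelined broadcast over a BFS tree of $N$, so that messages originating at all $|C|$ skeleton nodes reach all skeleton nodes in $\tilde O(|C|+D)$ rounds. When $D=\tilde\Omega(n^{1/3})$ the skeleton is small enough that a plain $|C|$-iteration Bellman-Ford on $H$ already runs in $\tilde O(|C|^2+|C|D)=\tilde O(n/D+\sqrt{nD})=\tilde O(\sqrt{nD})$ rounds, so no real recursion is needed. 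For smaller $D$ one recurses with a fresh hop parameter at each level; because that parameter is polynomially large the skeleton sizes shrink doubly exponentially, the recursion has depth $O(\log\log n)$, and one charges each level against its broadcast-simulation factor while exploiting that the approximate bounded-hop primitives (Lemmas~\ref{lem:approximate bounded hop distances CONGEST}--\ref{lem:approximate bounded hop distances multiple sources CONGEST}) make every node send only $\tilde O(\log W)=\tilde O(1)$ messages, so that a whole subcomputation on a skeleton can be pipelined. The bulk of the work is to choose the per-level hop parameters and verify that the resulting geometrically decaying per-level costs telescope to $\tilde O(\sqrt{nD})$, and to union-bound the hitting-set guarantees over the $O(\log\log n)$ recursion levels. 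Granting this, a single invocation of $\mathcal A$ costs $\tilde O(h+n/h+D)+\tilde O(\sqrt{nD})=\tilde O(\sqrt{nD})$ rounds, and Theorem~\ref{thm:approximate to exact reduction distributed} upgrades this to the claimed exact Las Vegas algorithm with expected round complexity $\tilde O(\sqrt{nD}\log W)$.
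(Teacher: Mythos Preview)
Your treatment of Steps~\ref{step:construct hitting set}, \ref{step:compute approximate hop distances}, \ref{step:construct skeleton graph}, \ref{step:construct augmented graph}, and \ref{step:final Bellman-Ford} matches the paper's implementation, and your plan to plug the resulting auxiliary algorithm into Theorem~\ref{thm:approximate to exact reduction distributed} is exactly right. The divergence is in \ref{step:exact SSSP on skeleton}, where you identify a ``genuine difficulty'' and propose a recursive construction with $O(\log\log n)$ levels whose details (choice of per-level hop parameters, telescoping of costs) are left to be worked out.

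The paper sidesteps all of this: it implements \ref{step:exact SSSP on skeleton} by running \emph{Dijkstra's algorithm} on the skeleton graph $H$, simulating each of the $|C|$ iterations by a single global min-upcast plus broadcast over a BFS tree of $N$. Each iteration costs $O(D)$ rounds, so the whole step costs $O(|C|\cdot D)=\tilde O((n/h)\cdot D)$ rounds; with $h=\sqrt{nD}$ this is $\tilde O(\sqrt{nD})$ directly, for every value of $D$, with no case distinction and no recursion. The point you are missing is that Dijkstra needs only \emph{one} value (the current minimum) to be communicated per iteration, so it avoids the $|C|^2$ term that your Bellman-Ford simulation incurs and that forces you into the recursion for $D=o(n^{1/3})$. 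After Step~2 every skeleton node $y$ already knows all its incoming weights $w_H(x,y)=\tilde\dist(x,y)$, so once the finalized node and its distance are broadcast, relaxation is purely local.

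Your recursive idea is closer in spirit to the paper's \emph{second} algorithm (Theorem~\ref{thm:second exact SSSP algorithm CONGEST}), which does build a skeleton of the skeleton---but only to a fixed depth of two, and it yields the different bound $\tilde O(\sqrt{n}D^{1/4}+n^{3/5}+D)$ rather than $\tilde O(\sqrt{nD})$. So while your plan may be salvageable, for Theorem~\ref{thm:first exact SSSP algorithm distributed} it is both harder than necessary and not yet a proof; replacing your Step~4 with the Dijkstra simulation closes the argument immediately.
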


In our algorithm, we use the following implementation of the auxiliary algorithm of Section~\ref{sec:auxiliary algorithm}:
\begin{enumerate}[align=left,label=\textbf{Step \arabic*.},ref=\textbf{Step \arabic{enumi}}]
\item
We implement the sampling process of Lemma~\ref{lem:randomized hitting set}:
Simultaneously, $ s $ adds itself to $ C $ and each other node adds itself to $ C $ with probability $ \min(8 \ln{n} / h, 1) $.
Afterwards, we spend $ O (D) $ rounds in a global upcast to determine the size of $ C $.
If $ |C| > 24 n \ln{n} / h $ (a low-probability event), then abort the algorithm.
Now the set $ C $ satisfies the condition demanded in \ref{step:construct hitting set} of Section~\ref{sec:auxiliary algorithm} with high probability and thus our implementation of the auxiliary algorithm will also be correct with high probability.

\item
Using the algorithm of Lemma~\ref{lem:approximate bounded hop distances multiple sources CONGEST} as a subroutine, this step can be implemented in $ \tilde O (h \log{W} + |C|) = \tilde O (h \log{W} + n/h) $ rounds.

\item The skeleton graph is constructed implicitly -- in the sense that each skeleton node only knows its set of incoming edges together with their weight -- by globally broadcasting the set of skeleton nodes in the network in $ O (|C| + D) = \tilde O (n/h + D) $ rounds.

\item We use a message-passing version of Dijkstra's algorithm that in each iteration determines the next node to visit, i.e., the one with minimum tentative distance, by performing a global upcast in the network.
This step can thus be implemented in $ O (|C| D) = \tilde O (n D / h) $ rounds.
Note that Dijkstra's algorithm satisfies the requirement to work on graphs with non-negative edge weights.

\item By performing the corresponding edge weight modifications internally at each node, i.e., without any additional communication, the augmented graph is constructed implicitly in the sense that each node only knows its set of incoming edges together with their weight.

\item We implement this step with the Bellman-Ford algorithm.
The first iteration of the Bellman-Ford algorithm can be performed in $ O (D) $ rounds as every skeleton node $ x \in C $ already knows the weight of the edge $ (s, x) $ in $ G' $ and thus only needs to be informed about the start of the algorithm.
The remaining $ h-1 $ iterations take $ O (h) $ rounds in a synchronized implementation of Bellman-Ford by Lemma~\ref{lem:Bellman Ford CONGEST}, yielding an overall complexity of $ O (h + |C| + D) = \tilde O(h + D) $ rounds for this step.
\end{enumerate}
Asymptotically, the overall number of rounds is $ \tilde O (h \log{W} + n D / h) $.
By setting $ h = \sqrt{n D} $ we obtain an auxiliary algorithm performing $ \tilde O (\sqrt{n D} \log{W}) $ rounds.
Theorem~\ref{thm:first exact SSSP algorithm distributed} now follows as a Corollary from Lemma~\ref{lem:correctness auxiliary algorithm} and Theorem~\ref{thm:approximate to exact reduction distributed}.

\section{Second CONGEST Model Implementation}\label{sec:second distributed algorithm}

In our second distributed algorithm, we obtain better guarantees for certain parameter ranges by using a different approach for computing exact distances from $ s $ on the skeleton graph; the rest of the algorithm is the same as in Section~\ref{sec:first distributed algorithm}.
Instead of running Dijkstra's algorithm, we implement \ref{step:exact SSSP on skeleton} by repeating the algorithmic scheme and effectively constructing a skeleton of the skeleton.
Intuitively, this somewhat straightforward repetition of the scheme improves the efficiency because computing on the skeleton graph allows slightly different algorithmic techniques than computing on the original network itself as the former is simulated by performing global broadcasts in the network.
This gives us some slack to exploit for increased efficiency.
We remark that adding more levels of recursion will not boost the efficiency further because computing shortest paths for the skeleton of the skeleton is not a bottleneck in our running time analysis.

In the following, we explicitly separate the two layers -- input graph and skeleton graph -- mentioned above.
We first show how to compute SSSP ``on the skeleton graph'' and then demonstrate how this can be used for computing SSSP on the input graph.

\subsection{Implementation in Broadcast LOCAL Clique Model}

Consider the following \emph{Broadcast LOCAL Clique model} which deviates from the Broadcast CONGEST model in the following ways: (1) the communication network $ N $ is a clique, i.e., every message sent by a node is received by \emph{all} other nodes of the network and (2) the size of the message sent per round is arbitrary (and in particular may also be $ 0 $).
The complexity of an algorithm for a clique network with $ n $~nodes is determined by the number of rounds $ R (n) $ and the total size of all messages~$ M (n) $ broadcast by the nodes over the course of the algorithm.

The Broadcast LOCAL Clique model may seem a bit artificial on its own, but it is highly relevant for our CONGEST model implementation of the auxiliary algorithm because of the following straightforward simulation result for computing exact distances on the skeleton graph.

\begin{lemma}[Implicit in \cite{Nanongkai14}]\label{lem:simulation of BCC algorithm}
Assume there is an exact SSSP algorithm $ \mathcal{A} $ for directed graphs with non-negative integer edge weights in the Broadcast LOCAL Clique model spending $ R (n) $ rounds and $ M (n) $ messages.
Then \ref{step:exact SSSP on skeleton} of the auxiliary algorithm can be implemented in $ \tilde O (M (|C|) / B + R (|C|) \cdot D) $ rounds in the Broadcast CONGEST model, where $ B = \Theta (\log{n}) $ is the bandwidth of the network and $ D $ is its diameter.
\end{lemma}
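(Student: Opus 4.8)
The plan is to run $\mathcal{A}$ on the skeleton graph $H=(C,C^2,w_H)$ — regarded as an execution in the Broadcast LOCAL Clique model over the $|C|$-vertex clique on $C$ — while \emph{simulating} that clique on the real communication network $N=(V,L)$ of diameter $D$. After \ref{step:construct skeleton graph} every node of $N$ already knows the set $C$, and every skeleton node knows the weights of all its incident edges in $H$, which is precisely the initial configuration $\mathcal{A}$ expects. I would then simulate $\mathcal{A}$ round by round, keeping the invariant that each clique round is completed everywhere before the next one starts. In clique round $r$ each $x\in C$ wants to send one message $m_x^{(r)}$ (of arbitrary, possibly zero, bit length) to all of $C$; in the simulation $x$ instead broadcasts $m_x^{(r)}$ to the whole network $N$, with non-skeleton nodes acting purely as relays. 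Since every node knows $C$, once a node has received $m_x^{(r)}$ for every $x\in C$ it can carry out whatever internal computation $\mathcal{A}$ prescribes for round $r$; in particular each $x\in C$ ends up knowing $\dist_H(s,x)$, which is what \ref{step:exact SSSP on skeleton} asks for.

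The workhorse is pipelined broadcasting. I would cut each message $m_x^{(r)}$ into $\lceil |m_x^{(r)}|/B\rceil$ packets of $\Theta(\log n)$ bits, tagging each packet with its source identifier, the round index $r$, the length of $m_x^{(r)}$, and the packet's position, so that every packet still fits in a single CONGEST message and every node can detect when it holds all of $m_x^{(r)}$. The packets are disseminated by flooding under a fixed global priority order (say, lexicographic in round, then source, then position): whenever a node has unsent packets it broadcasts the highest-priority one. The classical analysis of pipelined broadcast then shows that $p$ packets injected this way reach every node of $N$ within $O(p+D)$ rounds. Applied to clique round $r$, whose messages have total bit length $M_r:=\sum_{x\in C}|m_x^{(r)}|$, this simulates round $r$ in $\tilde O(M_r/B + D)$ Broadcast CONGEST rounds, where a standard $O(D)$-round barrier over a BFS tree of $N$ (built once in $O(D)$ rounds) signals the round boundary and absorbs the cost of the short and empty messages. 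Summing over the $R(|C|)$ clique rounds and using $\sum_r M_r\le M(|C|)$ yields the claimed bound of $\tilde O(M(|C|)/B + R(|C|)\cdot D)$ rounds.

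The part I expect to need the most care is the pipelined-broadcast step: unlike in the PRAM or clique setting, the simulating network $N$ is an arbitrary graph subject to the broadcast restriction, so I must argue that the $|C|$ concurrent floods interfere only up to an additive $O(D)$ term per clique round rather than multiplicatively — this is exactly the content of the classical $O(p+D)$ pipelined-broadcast lemma already exploited by Nanongkai~\cite{Nanongkai14}. The remaining ingredients — building and reusing a BFS tree of $N$ as a barrier, the $O(\log n)$-bit packet headers, and having non-skeleton nodes relay faithfully — are routine and only affect the polylogarithmic factors and the $R(|C|)\cdot D$ term.
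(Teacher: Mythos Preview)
Your proposal is correct and follows essentially the same approach as the paper's proof sketch: simulate $\mathcal{A}$ round by round on the skeleton graph, disseminate each clique round's messages network-wide via pipelined broadcast over a BFS tree so that round $r$ costs $\tilde O(M_r/B + D)$, and sum over the $R(|C|)$ rounds. The paper is terser (it just invokes ``standard arguments'' from~\cite{Peleg00} for the $O(p+D)$ pipelined broadcast), while you spell out the packetization, headers, and barrier mechanism, but the underlying argument is identical.
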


\begin{proof}[Proof Sketch]
We simulate a run of the Broadcast LOCAL Clique algorithm $ \mathcal{A} $ on the skeleton graph, which is a clique of $ |C| $~nodes.
We do this by making each message sent by $ \mathcal{A} $ global knowledge, which we carry out by globally broadcasting all messages via a breadth-first-search spanning tree of the communication network $ N $.
Such a spanning tree can be constructed initially in $ O (D) $ rounds.
In the $i$-th of the $ R(|C|) $ rounds of $ \mathcal{A} $ we have to send some $ M_i $~messages and we know that $ \sum_{1 \leq i \leq R(|C|)} M_i = M (|C|) $.
In the Broadcast CONGEST model, the total number of rounds for this simulation therefore is $ \tilde O (\sum_{1 \leq i \leq R(|C|)} (M_i / B + D)) = \tilde O ( M(|C|) / B + R(|C|) \cdot D) $ by standard arguments~\cite{Peleg00}.
\end{proof}

We now show how to obtain an efficient exact SSSP algorithm in the Broadcast LOCAL Clique model by first implementing the auxiliary algorithm of Section~\ref{sec:auxiliary algorithm} and then extending it to an exact SSSP algorithm using the the recursive-scaling reduction of Theorem~\ref{thm:approximate to exact reduction distributed}.

\begin{lemma}\label{lem:first exact SSSP algorithm blackboard}
In the Broadcast LOCAL Clique model, there is a randomized SSSP algorithm for directed graphs with non-negative integer edge weights in the range $ \{ 0, 1, \dots, W \} $ that in expectation spends $ R (n) = \tilde O (h \log W) $ rounds and $ M (n) = \tilde O ((n h + n^2 / h) \log{W}) $ messages for any integer parameter $ 1 \leq h \leq n $.
\end{lemma}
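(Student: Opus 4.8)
The plan is to give an explicit implementation of the schematic auxiliary algorithm of Section~\ref{sec:auxiliary algorithm} (\ref{step:construct hitting set}--\ref{step:final Bellman-Ford}) in the Broadcast LOCAL Clique model: by Lemma~\ref{lem:correctness auxiliary algorithm} this already yields distance estimates satisfying~\eqref{eq:reduction approximation} and~\eqref{eq:reduction domination}, so plugging it into the reduction of Theorem~\ref{thm:approximate to exact reduction distributed} turns it into an exact algorithm. Since a clique has $D = 1$, that reduction costs only a negligible $\tilde O(\log W)$ overhead, contributes an $O(\log(nW))$ multiplicative ``scale'' factor, and invokes the auxiliary algorithm only on instances whose weights are $O(n)$. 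Hence it suffices to implement one run of the auxiliary algorithm on $O(n)$-weight graphs within $\tilde O(h)$ rounds and $\tilde O(nh + n^2/h)$ messages; the $\log W$ of the statement will then come entirely from the scaling.

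I would then carry out the six steps in order. For \ref{step:construct hitting set}, each node flips the coin of Lemma~\ref{lem:randomized hitting set} locally (keeping $s$ in $C$); the $\tilde O(n/h)$ selected nodes broadcast their identifiers, so after $O(1)$ rounds and $\tilde O(n/h)$ messages every node knows $C$, restarting (an expected $O(1)$ times) if $|C|$ overshoots its high-probability bound. For \ref{step:compute approximate hop distances} I would run the $(1{+}\epsilon)$-approximate $h$-hop distance primitive of Lemma~\ref{lem:approximate bounded hop distances CONGEST} with $\epsilon = 1$ from each source in $C$ \emph{in parallel}: in the clique the unbounded message size removes the multi-source congestion that the random-delay trick of Lemma~\ref{lem:approximate bounded hop distances multiple sources CONGEST} handles in CONGEST, so each node just concatenates its per-instance messages, spending $\tilde O(h)$ rounds and (since $W = O(n)$) $\tilde O(1)$ messages per source, i.e. $\tilde O(n/h)$ per node and $\tilde O(n^2/h)$ in total. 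One further broadcast round, in which every $v \in C$ announces the weights $\tilde{\dist}(x, v)$ it just learned, makes the whole skeleton graph $H$ global knowledge using only $\tilde O(|C|^2) = \tilde O(n^2/h^2)$ messages; this realizes \ref{step:construct skeleton graph} and, crucially, \ref{step:exact SSSP on skeleton}, since every node can now compute $\dist_H(s, \cdot)$ internally for free. \ref{step:construct augmented graph} is purely local. For \ref{step:final Bellman-Ford}, since the correctness argument (Lemmas~\ref{lem:triangle inequality first auxiliary algorithm distributed} and~\ref{lem:combining to satisfy conditions}) relies on the \emph{exact} $h$-hop distances in $G'$, I would use plain Bellman-Ford (Lemma~\ref{lem:Bellman Ford CONGEST}) rather than a rounding-based primitive, costing $O(h)$ rounds and $O(nh)$ messages (its first iteration takes $O(1)$ rounds because the skeleton nodes already know their $\{s\} \times C$ edge weights from \ref{step:exact SSSP on skeleton}).

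The conceptual obstacle is \ref{step:exact SSSP on skeleton}, exact distances on $H$; the whole reason for passing through the clique model is that there it can be sidestepped by the ``broadcast $H$, then compute locally'' trick --- the analogue of the ``add edges for free'' slack available in the PRAM model --- which is cheap only because $H$ has just $\tilde O(n/h)$ nodes, so $\tilde O(n^2/h^2) \le \tilde O(n^2/h)$ messages suffice. Summing the steps, one run of the auxiliary algorithm on $O(n)$-weight graphs uses $\tilde O(h)$ rounds (dominated by \ref{step:compute approximate hop distances} and \ref{step:final Bellman-Ford}) and $\tilde O(nh + n^2/h)$ messages (the $nh$ from \ref{step:final Bellman-Ford}, the $n^2/h$ from \ref{step:compute approximate hop distances}); feeding this into Theorem~\ref{thm:approximate to exact reduction distributed}, whose Las Vegas guarantee is consistent with the restart in \ref{step:construct hitting set}, and absorbing the $\tilde O(\log W)$ overhead together with the $O(\log(nW))$ scale count gives $R(n) = \tilde O(h\log W)$ and $M(n) = \tilde O((nh + n^2/h)\log W)$ in expectation. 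The one place needing care is verifying that the weight blow-up internal to the auxiliary algorithm --- edge weights of $H$ are $O(nW)$, those of $G'$ are $O(n^2/h)$ under the reduction's $O(n)$-weight call --- keeps every logarithmic factor polylogarithmic in the original $n$, so that no extra factor of $\log W$ slips in beyond the one from scaling.
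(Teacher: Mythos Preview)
Your proposal is correct and follows essentially the same approach as the paper: implement the auxiliary algorithm step by step in the clique model, with the key observation that \ref{step:exact SSSP on skeleton} becomes trivial once every skeleton node broadcasts its row of~$H$ (cost $\tilde O(|C|^2)$ messages) so that $\dist_H(s,\cdot)$ can be computed internally, and then wrap everything in the reduction of Theorem~\ref{thm:approximate to exact reduction distributed}. Your accounting for where the $\log W$ factor actually originates (solely from the $O(\log(nW))$ calls in the scaling reduction, since inside each call the weights are $O(n)$) is in fact somewhat more explicit than the paper's own presentation.
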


\begin{proof}
We implement \ref{step:construct hitting set}, \ref{step:construct skeleton graph}, and \ref{step:construct augmented graph} as in Section~\ref{sec:first distributed algorithm}.
We implement \ref{step:compute approximate hop distances} by simultaneously running an instance $ \mathcal{A}_x $ of the algorithm of Lemma~\ref{lem:approximate bounded hop distances CONGEST} from each skeleton node $ x \in C $.
In each of the $ \tilde O (h \log{W}) $ rounds of the algorithm of Lemma~\ref{lem:approximate bounded hop distances CONGEST}, we aggregate, for every node, all the messages that it would have to broadcast over all instances $ \mathcal{A}_x $ and broadcast them to all other nodes in a single round in the Broadcast LOCAL Clique model. 
This results in a total size of $ \tilde O (n |C| \log{W}) = \tilde O (n^2 \log{W} / h) $ for all these messages in the Broadcast LOCAL Clique model.
We implement \ref{step:exact SSSP on skeleton} in the naive way by having each center node broadcast its outgoing edges in $ H $ and computing $ \dist_H (s, \cdot) $ internally at every node.
This takes $ O(1) $ rounds and $ O (|C|^2) = \tilde O (n^2/h^2) $ messages.
Finally, \ref{step:final Bellman-Ford} requires a single run of Bellman-Ford, which by Lemma~\ref{lem:Bellman Ford CONGEST} takes $ O (h) $ rounds and $ O (n h) $ messages.
We now apply the reduction of Theorem~\ref{thm:approximate to exact reduction distributed} to obtain the desired SSSP algorithm for the Broadcast LOCAL Clique model.
\end{proof}

We remark that the essential bottleneck of this implementation of the auxiliary turns out to be the final Bellman-Ford computation in \ref{step:final Bellman-Ford} when performed in this algorithm for the Broadcast LOCAL Clique model.
Recall that this step is necessary to ensure the domination property of~\eqref{eq:reduction domination}.

\subsection{Faster CONGEST Model Implementation in High-Diameter Networks}

\begin{theorem}\label{thm:second exact SSSP algorithm CONGEST}
In the Broadcast CONGEST model, there is a randomized SSSP algorithm for directed graphs with non-negative integer edge weights in the range $ \{ 0, 1, \dots, W \} $ that performs $ \tilde O ((\sqrt{n} D^{1/4} + n^{3/5} + D) \log{W}) $ rounds in expectation.
\end{theorem}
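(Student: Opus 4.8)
The plan is to reuse the auxiliary algorithm of Section~\ref{sec:auxiliary algorithm} together with the reduction of Theorem~\ref{thm:approximate to exact reduction distributed}, exactly as in Section~\ref{sec:first distributed algorithm}, changing only the implementation of \ref{step:exact SSSP on skeleton}. All other steps are implemented as before: \ref{step:construct hitting set} costs $\tilde O(D)$ rounds, \ref{step:compute approximate hop distances} costs $\tilde O(h \log W + n/h)$ rounds via Lemma~\ref{lem:approximate bounded hop distances multiple sources CONGEST}, \ref{step:construct skeleton graph} costs $\tilde O(n/h + D)$ rounds by broadcasting the skeleton node set, \ref{step:construct augmented graph} is free, and \ref{step:final Bellman-Ford} costs $\tilde O(h + D)$ rounds by Lemma~\ref{lem:Bellman Ford CONGEST}. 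So the only term we must improve is the $\tilde O(nD/h)$ cost of running Dijkstra on the $|C| = \tilde O(n/h)$-node skeleton graph.

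First I would invoke Lemma~\ref{lem:first exact SSSP algorithm blackboard}: with a (possibly different) parameter $h' \le |C|$ it gives an exact SSSP algorithm on the skeleton clique running in $R(|C|) = \tilde O(h' \log W)$ rounds and $M(|C|) = \tilde O((|C| h' + |C|^2/h')\log W)$ messages. Then I would plug this into the simulation Lemma~\ref{lem:simulation of BCC algorithm}, which turns it into a CONGEST implementation of \ref{step:exact SSSP on skeleton} costing
\[
\tilde O\!\left(\frac{M(|C|)}{B} + R(|C|)\cdot D\right)
= \tilde O\!\left(\bigl(|C| h' + |C|^2/h'\bigr)\log W + h' D \log W\right)
\]
rounds, since $B = \Theta(\log n)$ absorbs into the $\tilde O$. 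Using $|C| = \tilde O(n/h)$ this is $\tilde O((nh'/h + n^2/(h^2 h') + h'D)\log W)$.

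Next I would combine this with the costs of the other steps to get a total of
\[
\tilde O\!\left(\Bigl(h + \frac{n}{h} + \frac{nh'}{h} + \frac{n^2}{h^2 h'} + h'D + D\Bigr)\log W\right)
\]
rounds, and optimize over the two free parameters $1 \le h' \le |C| \le h \le n$. Balancing $h'D$ against $n^2/(h^2 h')$ suggests $h' \approx n/(h\sqrt{D})$, which makes both of those terms and the $nh'/h$ term equal to $\tilde O(n\sqrt{D}/h \cdot \text{(small factors)})$; then balancing the leading $h$ against this residual term gives roughly $h \approx \sqrt{n} D^{1/4}$, at which point $h = n/h$ as well. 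A short case analysis (for the regime where $h' < 1$ or $h' > |C|$ the bound degenerates, contributing the additive $n^{3/5}$ and $D$ terms — the $n^{3/5}$ arising when the constraint $h' \ge 1$ forces $n^2/(h^2h') = n^2/h^2$ to dominate, balanced against $h$ gives $h = n^{2/3}$... I would re-derive the exact breakpoints carefully) yields the claimed $\tilde O((\sqrt{n}D^{1/4} + n^{3/5} + D)\log W)$. Correctness of the auxiliary algorithm is Lemma~\ref{lem:correctness auxiliary algorithm}, and the Las Vegas guarantee and the $\log(nW)$ blow-up from scaling are exactly as in Theorem~\ref{thm:approximate to exact reduction distributed}, so Theorem~\ref{thm:second exact SSSP algorithm CONGEST} follows.

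The main obstacle is the two-parameter optimization with the nested constraint $h' \le |C| = \tilde O(n/h)$: one must check that the optimal choice actually respects $h' \ge 1$ and $h' \le |C|$ across the whole diameter range, and handle the boundary regimes where it does not — this is precisely where the $n^{3/5}$ and $D$ summands enter. Everything else is a direct reassembly of Lemmas~\ref{lem:first exact SSSP algorithm blackboard} and~\ref{lem:simulation of BCC algorithm} with the already-established costs of the remaining steps; no new structural idea beyond "recurse once on the skeleton" is needed.
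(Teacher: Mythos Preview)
Your approach is exactly the paper's: keep Steps~1--3, 5--6 as in Section~\ref{sec:first distributed algorithm}, implement Step~4 by simulating the Broadcast LOCAL Clique algorithm of Lemma~\ref{lem:first exact SSSP algorithm blackboard} via Lemma~\ref{lem:simulation of BCC algorithm}, arrive at the cost expression $\tilde O(h\log W + n/h + nh'/h + n^2/(h^2h') + h'D)$, optimize, and then invoke Theorem~\ref{thm:approximate to exact reduction distributed}.

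The only slip is your account of where the $n^{3/5}$ term comes from. It is \emph{not} a boundary effect of $h'\geq 1$: as you yourself compute, setting $h'=1$ and balancing $h$ against $n^2/h^2$ gives $h=n^{2/3}$, which would yield an $n^{2/3}$ bound, not $n^{3/5}$. The paper instead gives two explicit settings and picks between them after a BFS to estimate $D$: for $D\geq n^{2/5}$ take $h=\sqrt{n}\,D^{1/4}$ and $h'=\max(1,\sqrt{n}/D^{3/4})$, giving $\tilde O(\sqrt{n}\,D^{1/4}+D)$; for $D\leq n^{2/5}$ take $h=n^{3/5}$ and $h'=n^{1/5}$, giving $\tilde O(n^{3/5}+n^{1/5}D)=\tilde O(n^{3/5})$. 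The second setting comes from balancing the three interior terms $h$, $nh'/h$, and $n^2/(h^2h')$ simultaneously (forcing $h'=h^2/n$ and then $h^5=n^3$), with $h'=n^{1/5}$ well inside the feasible range. Once you replace your boundary heuristic with this two-case choice, the proof goes through verbatim.
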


\begin{proof}
We implement \ref{step:construct hitting set}, \ref{step:compute approximate hop distances}, \ref{step:construct skeleton graph}, \ref{step:construct augmented graph}, and \ref{step:final Bellman-Ford} of the auxiliary algorithm as in Section~\ref{sec:first distributed algorithm}.
Applying the simulation of Lemma~\ref{lem:simulation of BCC algorithm} to the algorithm of Lemma~\ref{lem:first exact SSSP algorithm blackboard} (with parameter~$ h' $), we can implement \ref{step:exact SSSP on skeleton} in $ \tilde O (|C| h' + |C|^2 / h' + h' D) $ rounds.
Now the overall number of rounds is $ \tilde O (h \log{W} + |C| + |C| h' + |C|^2 / h' + h' D) = \tilde O (h \log{W} + n / h + n h' / h + n^2 / (h^2 h') + h' D) $.
We use two variants for balancing these terms.
In the first variant, we set $ h = \sqrt{n} D^{1/4} $ and $ h' = \max (1, \sqrt{n} / D^{3/4}) $ to get an upper bound of $ \tilde O (\sqrt{n} D^{1/4} + \max (\sqrt{n}/D^{1/4}, n/D) + D) $ rounds, where the first term dominates the second term when $ D \geq n^{2/5} $.
In the second variant, we set $ h = n^{3/5} $ and $ h' = n^{1/5} $ to get an upper bound of $ \tilde O (n^{3/5} + n^{1/5} D) $ rounds, where the first term dominates the second term when $ D \leq n^{2/5} $.
Our overall algorithm first computes a $2$-approximation to the diameter $ D $ of the communication network $ N $ in $ O (D) $ rounds by performing breadth-first search from an arbitrary node in $ N $ and then chooses $ h $ and $ h' $ according to the approximate value of $ D $ to get an upper bound of $ \tilde O (\sqrt{n} D^{1/4} + n^{3/5} + D) $ rounds in this implementation of the auxiliary algorithm.
We now apply the reduction of Theorem~\ref{thm:approximate to exact reduction distributed} to obtain the desired SSSP algorithm for the Broadcast CONGEST model.
\end{proof}

Note that the algorithm of Theorem~\ref{thm:second exact SSSP algorithm CONGEST} is (asymptotically) faster than the simpler algorithm of Theorem~\ref{thm:first exact SSSP algorithm distributed} when $ D = \omega (n^{1/5}) $.

\section{Additional Results}\label{sec:additional_results}

In this section, we work out some additional results, first for approximate SSSP the distributed setting, and then for exact SSSP in the parallel setting.

\subsection{Directed Approximate SSSP}

In the following, we give an algorithm for computing approximate SSSP on directed graphs in the CONGEST model that matches the round complexity of the fastest known algorithm for single-source reachability up to polylogarithmic factors.

The algorithmic scheme followed by our algorithm is quite similar to the one of the auxiliary algorithm:
\begin{enumerate}[align=left,label=\textbf{Step \arabic*.},ref=\textbf{Step \arabic*}]
\item \label{step:construct hitting set approx}
Construct a set $ C \subseteq V $ of \emph{skeleton nodes} containing (a) the source node $ s $ and (b) additionally, for every pair of nodes $ u $ and $ v $ such that the shortest path from $ u $ to $ v $ in $ G $ consists of exactly $ \lceil h/2 \rceil $ nodes, at least one node on one of these shortest paths.

\item \label{step:compute approximate hop distances approx}
For each skeleton node $ x \in C $, compute $ (1 + \epsilon/3) $-approximate $h$-hop distances from $ x $, i.e., distance estimates $ \tilde{\dist} (x, \cdot) $ such that
\begin{equation*}
\dist_G (x, v) \leq \tilde{\dist} (x, v) \leq (1 + \epsilon) \cdot \dist_G^h (x, v)
\end{equation*}
for every node $ v \in V $.

\item \label{step:construct skeleton graph approx}
Construct the \emph{skeleton graph} $ H = (C, C^2, w_H) $ with edge weight $ w_H (x, y) = \tilde{\dist} (x, y) $ for every $ (x, y) \in C^2 $.

\item \label{step:approx SSSP on skeleton}
Compute $ (1 + \epsilon/3) $-approximate distances from $ s $ on the skeleton graph~$ H $, i.e., $ \dist_H (s, x) $ 
\begin{equation*}
\dist_H (s, x) \leq \dist' (s, x) \leq (1 + \epsilon) \cdot \dist_H (s, x)
\end{equation*}
for every skeleton node $ x \in C $.

\item \label{step:final approximation}
Compute
\begin{equation*}
\hat{\dist} (s, v) := \min_{x \in C} (\dist' (s, x) + \tilde{\dist} (x, v))
\end{equation*}
for every node $ v \in V $.
\end{enumerate}

\begin{lemma}
For any directed input graph $ G = (V, E, w_G) $ with fixed source node $ s $, any $ 0 < \epsilon \leq 1 $ and any integer $ 1 \leq h \leq n $, the algorithm above consisting of \textbf{Steps 1--5} computes, for every node $ v \in V $, a distance estimate $ \hat{\dist} (s, v) $ such that $ \dist_G (s, v) \leq \hat{\dist} (s, v) \leq (1 + \epsilon) \dist_G (s, v) $.
\end{lemma}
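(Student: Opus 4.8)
The plan is to prove the two inequalities $\dist_G(s,v)\le\hat{\dist}(s,v)$ and $\hat{\dist}(s,v)\le(1+\epsilon)\dist_G(s,v)$ separately, along the same lines as the correctness proof of the auxiliary algorithm (Lemmas~\ref{lem:approximation first auxiliary algorithm distributed} and~\ref{lem:triangle inequality first auxiliary algorithm distributed}), but with every use of exact distances replaced by a $(1+\epsilon/3)$-approximation.

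For the lower bound I would first note that the skeleton graph $H$ never underestimates $G$-distances: by~\ref{step:compute approximate hop distances approx} each edge of $H$ has weight $w_H(x,y)=\tilde{\dist}(x,y)\ge\dist_G(x,y)$, so by the triangle inequality in $G$ any $s$-$x$ path in $H$ has weight $\ge\dist_G(s,x)$, hence $\dist'(s,x)\ge\dist_H(s,x)\ge\dist_G(s,x)$ using~\ref{step:approx SSSP on skeleton}. Since also $\tilde{\dist}(x,v)\ge\dist_G(x,v)$, every term of the minimum defining $\hat{\dist}(s,v)$ in~\ref{step:final approximation} is $\ge\dist_G(s,x)+\dist_G(x,v)\ge\dist_G(s,v)$, which gives the lower bound.

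The upper bound is the main part. I would fix a shortest $s$-$v$ path $\pi$ in $G$ (assuming $v$ is reachable from $s$, else there is nothing to prove), take it simple, and --- exactly as in the proof of Lemma~\ref{lem:triangle inequality first auxiliary algorithm distributed} --- subdivide it into consecutive blocks $\pi_1,\dots,\pi_k$ where $\pi_1$ has at most $\lceil h/2\rceil$ nodes, each of $\pi_2,\dots,\pi_k$ has exactly $\lceil h/2\rceil$ nodes, the last block ends at $v$, and (after possibly replacing $\pi$ using the hitting-set property of~\ref{step:construct hitting set approx}) each of $\pi_2,\dots,\pi_k$ contains a skeleton node; set $y_1:=s$ and choose $y_i\in C$ on $\pi_i$ for $2\le i\le k$. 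Because each block spans at most $\lceil h/2\rceil$ nodes, the sub-path of $\pi$ between consecutive $y_i,y_{i+1}$ has at most $h-1$ edges and the sub-path from $y_k$ to $v$ has at most $\lceil h/2\rceil-1\le h$ edges; as sub-paths of shortest paths are shortest paths, this yields $\dist_G^h(y_i,y_{i+1})=\dist_G(y_i,y_{i+1})$ and $\dist_G^h(y_k,v)=\dist_G(y_k,v)$. Now I would chain the skeleton edges: $w_H(y_i,y_{i+1})=\tilde{\dist}(y_i,y_{i+1})\le(1+\epsilon/3)\dist_G(y_i,y_{i+1})$ by~\ref{step:compute approximate hop distances approx}, so telescoping along $\pi$ gives $\dist_H(s,y_k)\le(1+\epsilon/3)\dist_G(s,y_k)$ and hence $\dist'(s,y_k)\le(1+\epsilon/3)^2\dist_G(s,y_k)$ by~\ref{step:approx SSSP on skeleton}. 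Combining with $\tilde{\dist}(y_k,v)\le(1+\epsilon/3)\dist_G(y_k,v)$ and the fact that $y_k$ lies on $\pi$ (so $\dist_G(s,y_k)+\dist_G(y_k,v)=\dist_G(s,v)$), the choice $x=y_k$ in~\ref{step:final approximation} gives
\[
\hat{\dist}(s,v)\le\dist'(s,y_k)+\tilde{\dist}(y_k,v)\le(1+\epsilon/3)^2\,\dist_G(s,v)\le(1+\epsilon)\,\dist_G(s,v),
\]
and the degenerate case $k=1$ (where $y_k=s$ and the chain is empty) is covered by the same computation.

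I do not expect a genuine obstacle: this is a direct adaptation of the auxiliary-algorithm argument, with the approximate skeleton solver of~\ref{step:approx SSSP on skeleton} playing the role of the exact computation in~\ref{step:exact SSSP on skeleton} and the final ``$\min_{x\in C}$'' of~\ref{step:final approximation} playing the role of the final Bellman-Ford step in~\ref{step:final Bellman-Ford}. The only points needing care are (i) orienting the block decomposition of $\pi$ so that its last block ends at $v$, which is what makes the $y_k$-to-$v$ leg automatically at most $h$ hops long so that the $h$-hop guarantee of~\ref{step:compute approximate hop distances approx} is tight there; (ii) the usual hitting-set technicality (that $C$ is guaranteed to hit one \emph{chosen} shortest path per node pair, not every one), resolved exactly as in Lemma~\ref{lem:triangle inequality first auxiliary algorithm distributed} by first swapping $\pi$ for a suitable shortest path; and (iii) checking $(1+\epsilon/3)^2=1+\tfrac{2\epsilon}{3}+\tfrac{\epsilon^2}{9}\le1+\epsilon$, which holds for all $0<\epsilon\le1$.
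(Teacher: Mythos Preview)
Your proposal is correct and takes essentially the same approach the paper has in mind: the paper omits the proof, stating only that it is a variation of the arguments in~\cite{Nanongkai14} together with the observation $(1+\epsilon/3)^2\le 1+\epsilon$, and your block-decomposition argument---adapted from Lemmas~\ref{lem:approximation first auxiliary algorithm distributed} and~\ref{lem:triangle inequality first auxiliary algorithm distributed}---is exactly such a variation.
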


We omit the proof of this lemma, as the arguments in correctness proof for the algorithm are just a variation of those given in~\cite{Nanongkai14}.
The $ (1 + \epsilon) $-approximation guarantee follows because $ (1 + \epsilon/3)^2 \leq 1 + \epsilon $.

We proceed with giving a CONGEST-model implementation of the algorithm above.
Here, we use a two-step process similar to the implementation of the auxiliary algorithm in Section~\ref{sec:second distributed algorithm}:
We first provide an implementation in the Broadcast LOCAL Clique model and then use that algorithm as a black box for the implementation in the Broadcast CONGEST model.

\begin{lemma}\label{lem:approx SSSP algorithm blackboard}
In the Broadcast LOCAL Clique model, there is a randomized $ (1 + \epsilon) $-approximate SSSP algorithm for directed graphs with non-negative integer edge weights in the range $ \{ 0, 1, \dots, W \} $ that spends $ R (n) = \tilde O (h \log W / \epsilon) $ rounds and $ M (n) = \tilde O (n^2 / (\epsilon h) \log{W}) $ messages for any integer parameter $ 1 \leq h \leq n $.
The algorithm is correct with high probability.
\end{lemma}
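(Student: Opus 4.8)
The plan is to realize the five-step scheme above (Steps~1--5) directly in the Broadcast LOCAL Clique model. Correctness of the returned estimates $\hat\dist(s,\cdot)$ is already guaranteed by the preceding lemma, so the only work is to bound the round count $R(n)$ and the message volume $M(n)$; this is the approximate analogue of the implementation behind Lemma~\ref{lem:first exact SSSP algorithm blackboard}, and in fact a bit simpler, since the approximation scheme has no counterpart of the final Bellman--Ford step \ref{step:final Bellman-Ford} that dominated there. I expect essentially the entire cost to come from \ref{step:compute approximate hop distances approx}.

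Concretely, for \ref{step:construct hitting set approx} I would run the sampling of Lemma~\ref{lem:randomized hitting set} as in Section~\ref{sec:first distributed algorithm}: each non-source node joins $C$ with probability $\min(8\ln n/h,\,1)$, every node broadcasts its one-bit membership flag, and the run aborts (in $O(1)$ rounds) on the low-probability event $|C|>24n\ln n/h$; so the complexity bounds will hold deterministically, while with high probability $C$ has size $\tilde O(n/h)$ and hits every shortest path on $\lceil h/2\rceil$ nodes. For \ref{step:compute approximate hop distances approx} I would run, simultaneously from every $x\in C$, an instance $\mathcal A_x$ of the algorithm of Lemma~\ref{lem:approximate bounded hop distances CONGEST} with accuracy parameter $\epsilon/3$, bundling per round exactly as in the proof of Lemma~\ref{lem:first exact SSSP algorithm blackboard}: in the $i$-th of its $\tilde O(h\log W/\epsilon)$ rounds, each node collects the messages it would broadcast across all active instances into a single Broadcast LOCAL Clique message. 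Since that algorithm has each node broadcast only $\tilde O(\log W)$ messages per instance, the total traffic is $\tilde O(n|C|\log W)=\tilde O((n^2/h)\log W)$ messages. Crucially, because every broadcast is heard by every node, after \ref{step:compute approximate hop distances approx} every node knows $\tilde\dist(x,y)$ for all $x,y\in C$ (and each $v$ moreover knows $\tilde\dist(x,v)$ for all $x$); hence \ref{step:construct skeleton graph approx}, \ref{step:approx SSSP on skeleton} and \ref{step:final approximation} require no further communication at all --- every node reconstructs $H$ locally, computes $\dist_H(s,\cdot)$ internally (I would just compute it exactly, which is in particular a $(1+\epsilon/3)$-approximation), and outputs $\hat\dist(s,v)=\min_{x\in C}(\dist_H(s,x)+\tilde\dist(x,v))$ internally. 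If one insists on broadcasting the edges of $H$ explicitly in \ref{step:construct skeleton graph approx}, that adds only $O(1)$ rounds and $\tilde O(|C|^2)=\tilde O(n^2/h^2)$ messages.

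Adding up: $R(n)=\tilde O(h\log W/\epsilon)$ (dominated by \ref{step:compute approximate hop distances approx}), and $M(n)=\tilde O((n^2/h)\log W)+\tilde O(n^2/h^2)+\tilde O(n)=\tilde O((n^2/h)\log W)$, which is within the claimed $\tilde O((n^2/(\epsilon h))\log W)$ since $\epsilon\le 1$ and $h\ge 1$. High-probability correctness follows from the hitting-set guarantee of Lemma~\ref{lem:randomized hitting set} --- the only source of randomness, the distance subroutine of Lemma~\ref{lem:approximate bounded hop distances CONGEST} being deterministic --- together with the correctness guarantee of the preceding lemma for the scheme. The one point that warrants care, and the only mild obstacle, is to argue that running $|C|$ concurrent instances in \ref{step:compute approximate hop distances approx} does not inflate the round count by a factor of $|C|$: it does not, precisely because a Broadcast LOCAL Clique round may carry an arbitrarily large message, so one round of all instances is simulated in one round, and because the distance algorithm of Lemma~\ref{lem:approximate bounded hop distances CONGEST} makes each node's own estimate part of the broadcast traffic, which is exactly what lets Steps~3--5 run with zero communication. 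Everything else is routine and mirrors the proof of Lemma~\ref{lem:first exact SSSP algorithm blackboard}.
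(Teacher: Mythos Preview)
Your proposal is correct and follows essentially the same approach as the paper's proof: sample the hitting set, run the bounded-hop approximate distance primitive of Lemma~\ref{lem:approximate bounded hop distances CONGEST} in parallel from all skeleton nodes by bundling the per-round broadcasts, then compute exact distances on the skeleton internally. The paper carries out \ref{step:approx SSSP on skeleton} by having each center explicitly broadcast its edges in $H$ (at cost $\tilde O(|C|^2)$ messages), which is precisely your stated fallback; your additional remark that this broadcast may be redundant because all nodes already overheard the relevant estimates is a harmless optimization and does not affect the asymptotics. Note also that your message bound $\tilde O((n^2/h)\log W)$ for \ref{step:compute approximate hop distances approx} is in fact slightly tighter than the paper's stated $\tilde O((n^2/(\epsilon h))\log W)$, consistent with the $\tilde O(\log W)$ per-node message bound in Lemma~\ref{lem:approximate bounded hop distances CONGEST}; since $\epsilon\le 1$ this still sits inside the claimed $M(n)$.
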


\begin{proof}
We implement \ref{step:construct hitting set approx} and \ref{step:construct skeleton graph approx} as in Section~\ref{sec:first distributed algorithm}.
We implement \ref{step:compute approximate hop distances approx} by simultaneously running an instance $ \mathcal{A}_x $ of the algorithm of Lemma~\ref{lem:approximate bounded hop distances CONGEST} from each skeleton node $ x \in C $.
In each of the $ \tilde O (h \log{W} / \epsilon) $ rounds of the algorithm of Lemma~\ref{lem:approximate bounded hop distances CONGEST}, we aggregate, for every node, all the messages that it would have to broadcast over all instances $ \mathcal{A}_x $ and broadcast them to all other nodes in a single round in the Broadcast LOCAL Clique model. 
This results in a total size of $ \tilde O (n |C| \log{W} / \epsilon) = \tilde O (n^2 \log{W} / (\epsilon h)) $ for all these messages in the Broadcast LOCAL Clique model.
We implement \ref{step:approx SSSP on skeleton} in the naive way by having each center node broadcast its outgoing edges in $ H $ and computing $ \dist' (s, \cdot) := \dist_H (s, \cdot) $ (i.e., exact distances) internally at every node.
This takes $ O(1) $ rounds and $ O (|C|^2) = \tilde O (n^2/h^2) $ messages.
Finally, \ref{step:final approximation} only requires internal computation.
\end{proof}

\begin{theorem}
In the Broadcast CONGEST model, there is a randomized $ (1 + \epsilon) $-approximate SSSP algorithm for directed graphs with non-negative integer edge weights in the range $ \{ 0, 1, \dots, W \} $ that performs $ \tilde O ((\sqrt{n} D^{1/4} + D) \log{W} / \epsilon) $ rounds.
The algorithm is correct with high probability.
\end{theorem}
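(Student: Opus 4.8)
The plan is to run the five-step approximate scheme of \ref{step:construct hitting set approx}--\ref{step:final approximation} essentially as in the proof of Theorem~\ref{thm:second exact SSSP algorithm CONGEST}, the one genuine difference being that there is no longer any need to invoke the reduction of Theorem~\ref{thm:approximate to exact reduction distributed}: by the correctness lemma above the scheme already returns a $(1+\epsilon)$-approximation on its own (using $(1+\epsilon/3)^2 \le 1+\epsilon$). First I would implement \ref{step:construct hitting set approx} via the sampling process of Lemma~\ref{lem:randomized hitting set} with $q = \lceil h/2 \rceil$, spend $O(D)$ rounds upcasting to learn $|C|$, and abort on the low-probability event $|C| > \tilde O(n/h)$, so that afterwards $|C| = \tilde O(n/h)$ with high probability. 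Then \ref{step:compute approximate hop distances approx} is the multiple-source algorithm of Lemma~\ref{lem:approximate bounded hop distances multiple sources CONGEST} run on source set $S = C$ with error $\epsilon/3$, at a cost of $\tilde O(h\log W/\epsilon + |C| + D)$ rounds; \ref{step:construct skeleton graph approx} is kept implicit and needs only $O(|C|+D)$ rounds to make the identities of $C$ known network-wide; and \ref{step:final approximation} is internal computation at every node, since each $v$ already stores $\tilde{\dist}(x,v)$ for all $x \in C$ from \ref{step:compute approximate hop distances approx} and the $|C|$ values $\dist'(s,x)$ become global knowledge after \ref{step:approx SSSP on skeleton} (or can be broadcast in an extra $\tilde O(|C|+D)$ rounds).

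The one substantive step is \ref{step:approx SSSP on skeleton}. I would take the Broadcast LOCAL Clique $(1+\epsilon)$-approximate SSSP algorithm of Lemma~\ref{lem:approx SSSP algorithm blackboard}, instantiate its internal hop parameter as a second parameter $h'$ and its error as $\epsilon/3$, so that on the $|C|$-node clique $H$ it uses $R(|C|) = \tilde O(h'\log W/\epsilon)$ rounds and $M(|C|) = \tilde O(|C|^2\log W/(\epsilon h'))$ messages, and then feed it through the simulation of Lemma~\ref{lem:simulation of BCC algorithm}. Using $B = \Theta(\log n)$ and $\log W = O(\log n)$, this implements \ref{step:approx SSSP on skeleton} in $\tilde O(M(|C|)/B + R(|C|)\cdot D) = \tilde O(|C|^2/(\epsilon h') + h' D\log W/\epsilon)$ rounds in the Broadcast CONGEST model.

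Summing over all steps and substituting $|C| = \tilde O(n/h)$, the round complexity is
\[
\tilde O\!\left( \frac{h\log W}{\epsilon} + \frac{n}{h} + D + \frac{n^2}{\epsilon\, h^2 h'} + \frac{h' D\log W}{\epsilon}\right),
\]
and it remains to choose $h, h'$. I would first obtain a $2$-approximation of $D$ by a BFS from an arbitrary node in $O(D)$ rounds and then set $h = \sqrt{n}\, D^{1/4}$ and $h' = \max\bigl(1, \sqrt{n}/D^{3/4}\bigr)$. Then $n/h = \sqrt n/D^{1/4} \le \sqrt n D^{1/4}$, $n^2/h^2 = n/\sqrt D$, and a short case distinction ($D \le n^{2/3}$, where $h' = \sqrt n/D^{3/4} \ge 1$ and both $n^2/(h^2 h')$ and $h'D$ equal $\tilde O(\sqrt n D^{1/4})$; and $D > n^{2/3}$, where $h' = 1$, $h'D = D$, and $n^2/(h^2 h') = n/\sqrt D \le D$) shows the total is $\tilde O((\sqrt n D^{1/4}+D)\log W/\epsilon)$, as claimed.

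The part I expect to need the most care is exactly this balancing: checking that the $h'$-dependent terms stay within budget once $h'$ must be clamped to $1$ in the high-diameter regime $D > n^{2/3}$ (where the additive $D$ term dominates), and confirming that $h'$ remains a legal hop parameter ($1 \le h' \le |C|$) throughout. No new correctness argument is needed beyond the (omitted) correctness lemma for the scheme; the high-probability guarantee is inherited from Lemma~\ref{lem:randomized hitting set} and Lemma~\ref{lem:approximate bounded hop distances multiple sources CONGEST}.
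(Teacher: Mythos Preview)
Your proposal is correct and follows essentially the same approach as the paper: implement \ref{step:construct hitting set approx}--\ref{step:construct skeleton graph approx} as in Section~\ref{sec:first distributed algorithm}, realize \ref{step:approx SSSP on skeleton} by simulating the Broadcast LOCAL Clique algorithm of Lemma~\ref{lem:approx SSSP algorithm blackboard} (with parameter $h'$ and error $\epsilon/3$) via Lemma~\ref{lem:simulation of BCC algorithm}, and balance with $h=\sqrt{n}\,D^{1/4}$, $h'=\max(1,\sqrt{n}/D^{3/4})$. Your explicit case split at $D=n^{2/3}$ and the sanity check $1\le h'\le |C|$ are useful additions that the paper leaves implicit.
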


\begin{proof}
We implement \ref{step:construct hitting set approx}, \ref{step:compute approximate hop distances approx}, and \ref{step:construct skeleton graph approx} as in Section~\ref{sec:first distributed algorithm}.
Applying the simulation of Lemma~\ref{lem:simulation of BCC algorithm}, which holds regardless of the approximation ratio, to the algorithm of Lemma~\ref{lem:approx SSSP algorithm blackboard} (with parameters~$ h' $ and $ \epsilon' = \epsilon/3 $), we can implement \ref{step:approx SSSP on skeleton} in $ \tilde O (|C|^2 / (\epsilon h') \log W + h' D \log W / \epsilon) $ rounds.
Now the overall number of rounds is $ \tilde O ((h + |C| + |C|^2 / h' + h' D) \log{W} / \epsilon) = \tilde O ((h + n / h + n^2 / (h^2 h') + h' D) \log W / \epsilon) $.
By setting $ h = \sqrt{n} D^{1/4} $ and $ h' = \max(1, \sqrt{n} / D^{3/4}) $ this becomes $ \tilde O ((\sqrt{n} D^{1/4} + D) \log{W} / \epsilon) $ as desired.
\end{proof}

\subsection{PRAM Trade-off}

In the following, we show that our approach also gives a new work/depth trade-off for computing SSSP in the PRAM model.
In analogy to the RAM model, the PRAM (parallel RAM) allows the possibility of processors computing in parallel with shared memory access.
The \emph{work} of a parallel algorithm is the total number of operations performed over all processors
and the \emph{depth} is the length of the longest series of operations that have to be performed sequentially due to data dependencies.
The depth is essentially the number of parallel computation steps needed until the last processor is finished for any schedule of assigning the algorithm's operations to the processors.

Our contribution in the PRAM model is an exact SSSP algorithm with work $ \tilde O ((n^3/h^3 + m n/h) \log W) $ and depth $ \tilde O (h) $ for any $ 1 \leq h \leq \sqrt{n} $.
For \emph{directed} graphs, this specific trade-off was not known before.
In particular, the algorithm of Klein and Subramanian~\cite{KleinS97}, who follows the same general framework as we do, has work $ \tilde O (m \sqrt{n}) $ and depth $ \tilde O (\sqrt{n}) $, i.e., it it does not allow for such a trade-off as any choice of $ h $ different from $ \tilde \Theta (\sqrt{n}) $ would be sub-optimal.
Thus, the conceptual novelties in our algorithm, which were motivated by the application to the CONGEST model, also carry over to an improvement in the PRAM model.

\begin{theorem}
In the PRAM model, there is a randomized SSSP algorithm for directed graphs with non-negative integer edge weights in the range $ \{ 0, 1, \dots, W \} $ that has work $ \tilde O ((n^3/h^3 + m h + m n/h) \log W) $ with high probability and in expectation and depth $ \tilde O (h) $ for any given $ 1 \leq h \leq n $.
\end{theorem}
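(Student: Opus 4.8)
The plan is to run the six-step schematic auxiliary algorithm of Section~\ref{sec:auxiliary algorithm} verbatim---its correctness (Lemma~\ref{lem:correctness auxiliary algorithm}) is purely combinatorial and hence model-independent---to give a PRAM implementation of each of its steps, and then to extend it to an exact algorithm via the PRAM form of Klein--Subramanian's recursive-scaling reduction (of which Theorem~\ref{thm:approximate to exact reduction distributed} is the CONGEST adaptation; see~\cite{KleinS97}). That reduction consists of $O(\log nW)$ sequential scaling phases, each performing one invocation of the auxiliary algorithm---on a graph whose edge weights are $O(n)$, so that any ``$\log W$'' appearing inside the auxiliary algorithm is in fact polylogarithmic---plus an $\tilde O(m)$-work, $\tilde O(1)$-depth graph scan to round weights and repair the solution. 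Hence, if the auxiliary algorithm runs with work $\mathcal W$ and depth $\mathcal D$ on such graphs, the resulting exact algorithm runs with work $\tilde O((\mathcal W + m)\log W)$ and depth $\tilde O(\mathcal D\log W)$, and it suffices to implement the auxiliary algorithm with $\mathcal W = \tilde O(n^3/h^3 + mh + mn/h)$ and $\mathcal D = \tilde O(h)$.

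For the individual steps I would mirror the CONGEST implementations of Sections~\ref{sec:first distributed algorithm} and~\ref{sec:second distributed algorithm}. \ref{step:construct hitting set} is independent sampling: $\tilde O(n)$ work, $O(1)$ depth, with $|C| = \tilde O(n/h)$ whp. For \ref{step:compute approximate hop distances} I run the $|C|$ single-source computations in parallel; the PRAM analogue of Lemma~\ref{lem:approximate bounded hop distances CONGEST} is a weight-scaled, hop-bounded ``BFS''/Bellman--Ford costing $\tilde O(m)$ work and $\tilde O(h)$ depth per source, hence $\tilde O(mn/h)$ work and $\tilde O(h)$ depth in total. \ref{step:construct skeleton graph} and \ref{step:construct augmented graph} are purely local relabellings: $\tilde O(|C|^2 + m) = \tilde O(n^2/h^2 + m)$ work, $O(1)$ depth. \ref{step:final Bellman-Ford} is $h$ rounds of parallel Bellman--Ford (the PRAM analogue of Lemma~\ref{lem:Bellman Ford CONGEST}, using $O(\log n)$-depth min-aggregation over incoming edges) on $G'$, which has $m + |C|$ edges, costing $\tilde O((m + n/h)h) = \tilde O(mh + n)$ work and $\tilde O(h)$ depth. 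All of this is within budget; the only remaining step is \ref{step:exact SSSP on skeleton}.

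For \ref{step:exact SSSP on skeleton} I exploit that the skeleton graph $H$ is a complete graph on $n_1 := |C| = \tilde O(n/h)$ nodes, hence has $\tilde O(n_1^2)$ edges. If $n_1 \le h$, I run $n_1$ rounds of Bellman--Ford directly on $H$: $\tilde O(n_1 \cdot n_1^2) = \tilde O(n^3/h^3)$ work and $\tilde O(n_1) = \tilde O(h)$ depth. If $n_1 > h$, I recurse, applying the same six-step scheme to $H$ with the same parameter $h$ (legal, since then $h < n_1$, and this stays true at every level entered). When $h$ exceeds the polylogarithmic threshold below which the hitting set degenerates to all of $V$, each level shrinks the node count by an $\tilde\Omega(h)$ factor, so after $O(\log_h n) = \tilde O(1)$ levels the node count drops below $h$ and the recursion bottoms out with a direct Bellman--Ford as above; at level $i \ge 1$ the graph is complete on $n_i = \tilde O(n/h^i)$ nodes, and its \ref{step:compute approximate hop distances}, \ref{step:final Bellman-Ford}, and bottom-level Bellman--Ford cost $\tilde O(n_i^3/h)$, $\tilde O(n_i^2 h)$, and $\tilde O(n_i^3)$ work respectively, each $\le \tilde O(n^3/h^3)$ for $i \ge 1$ once $h \le \sqrt n$, while each level adds only $\tilde O(h)$ depth. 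Summing over the $\tilde O(1)$ levels and combining with the level-$0$ costs gives $\mathcal W = \tilde O(n^3/h^3 + mh + mn/h)$ (the $mh$ and $mn/h$ terms stemming only from level $0$) and $\mathcal D = \tilde O(h)$, as required. For $h = \polylog(n)$ one simply computes all pairwise distances by $O(\log n)$ min-plus matrix squarings---$\tilde O(n^3)$ work, $\tilde O(1)$ depth---which already meets the claimed bounds since then $n^3/h^3 = \tilde\Theta(n^3)$.

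I expect the main obstacle to be the accounting around \ref{step:exact SSSP on skeleton}: checking that the recursion terminates after only $\tilde O(1)$ levels with the depth budget intact, that $h$ remains a legal parameter at every invoked level, and that the per-level \ref{step:compute approximate hop distances}/\ref{step:final Bellman-Ford}/bottom-Bellman--Ford costs telescope to $\tilde O(n^3/h^3)$ rather than picking up an extra factor of $h$ at the deepest levels. A secondary, more routine task is to state the PRAM counterparts of Lemma~\ref{lem:Bellman Ford CONGEST}, Lemma~\ref{lem:approximate bounded hop distances CONGEST}, and the Klein--Subramanian reduction; since none of their proofs use bandwidth-specific features of CONGEST, each translates directly into parallel edge relaxations with logarithmic-depth min-aggregation plus one parallel graph scan per scaling phase.
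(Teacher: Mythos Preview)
Your overall framework---implement the six-step auxiliary algorithm in the PRAM model and then apply Klein--Subramanian's recursive scaling---matches the paper exactly, and your implementations of Steps~1, 2, 3, 5, 6 are essentially the paper's. The divergence is in \ref{step:exact SSSP on skeleton}.

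The paper does \emph{not} recurse there. It simply computes all-pairs shortest paths on the skeleton~$H$ by $O(\log |C|)$ rounds of min-plus matrix squaring: work $\tilde O(|C|^3)=\tilde O(n^3/h^3)$ and depth $\polylog n$. This single non-recursive step already meets the budget for every $1\le h\le n$, so the subsequent accounting is trivial.

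Your recursive approach, by contrast, has a genuine gap. \ref{step:exact SSSP on skeleton} demands \emph{exact} distances in~$H$; the six-step scheme alone only returns the $\tfrac12$-approximate estimates~$\hat d$ of Lemma~\ref{lem:correctness auxiliary algorithm}. Hence each recursive level must itself wrap the six-step scheme in a Klein--Subramanian reduction, contributing a multiplicative $O(\log n)$ factor to both work and depth. With $\ell=\Theta(\log_h n)$ levels, the overhead is $(\log n)^{\Theta(\log_h n)}$. You handle $h=\polylog n$ separately and $h=n^{\Omega(1)}$ gives $\ell=O(1)$, but the intermediate regime---say $h=2^{\sqrt{\log n}}$, which is neither $\polylog n$ nor $n^{\Omega(1)}$---yields $\ell=\sqrt{\log n}$ and a blowup of $2^{\Theta(\sqrt{\log n}\,\log\log n)}$, which is not $\tilde O(1)$. (If instead you intended to recurse with the bare six-step scheme and apply Klein--Subramanian only once at the top, then the inner levels feed only approximate skeleton distances into the outer Step~5, and Lemmas~\ref{lem:approximation first auxiliary algorithm distributed}--\ref{lem:triangle inequality first auxiliary algorithm distributed} no longer go through as stated; the approximation factor degrades by a factor of~$2$ per level.)

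The fix is already in your own toolbox: apply the min-plus squaring you reserve for the $h=\polylog n$ case to the \emph{skeleton graph} rather than the full graph, for \emph{all}~$h$. That is precisely what the paper does.
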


\begin{proof}
To obtain the exact SSSP algorithm, we implement the auxiliary algorithm of Section~\ref{sec:auxiliary algorithm} as follows:
\begin{enumerate}[align=left,label=\textbf{Step \arabic*.},ref=\textbf{Step \arabic{enumi}}]
\item We implement the sampling process of Lemma~\ref{lem:randomized hitting set}:
Each node is added to $ C $ independently with probability $ \min (8 \ln n/h, 1) $.
Afterwards, determine the size of $ C $ and if $ |C| > 24 n \ln n/h $ (a low-probability event), then abort the algorithm.
This has work $ O (n) $ and depth $ O (\log n) $.
Now the set $ C $ satisfies the condition demanded in \ref{step:construct hitting set} of Section~\ref{sec:auxiliary algorithm} with high probability and thus our implementation of the auxiliary algorithm will also be correct with high probability.

\item For each node in $ C $, this step can be implemented by computing a shortest path tree up to distance $ O (h) $ in a graph with suitably rounded edge weights.
Similar to the analysis of bounded breadth-first search, this approach has work $ \tilde O (m \log W) $ for each node in $ C $ and depth $ O (h) $ (see Lemma 3.2 in~\cite{KleinS97}, which is similar to Lemma~\ref{lem:approximate bounded hop distances CONGEST} in Section~\ref{sec:preliminaries}).
Thus the total work of this step is $ \tilde O ((n/h) \cdot m \log W) $ and the depth is $ O (h) $.

\item The straightforward approach for constructing and storing the graph $ H $ has work $ O (|C|^3) = O (n^3/h^3) $ and depth $ O (\log n) $.

\item We implement this step by using the all-pairs shortest path algorithm based on repeated squaring with min-plus matrix multiplication.
This has work $ \tilde O (|C|^3) = \tilde O (n^3 / h^3) $ and depth $ \polylog{n} $.

\item The straightforward approach for constructing and storing the graph $ G' $ has work $ O (m + |C|) = O (m) $ and depth $ O (\log n) $.

\item We implement this step with a parallel version of the Bellman-Ford algorithm that synchronizes the computation in $ h $ rounds.
This has work $ \tilde O ((m + |C|) \cdot h) = \tilde O (m h) $ and depth $ \tilde O (h) $.
\end{enumerate}

Overall, our auxiliary algorithm has work $ \tilde O (n^3/h^3 + m h + m (n/h) \log W) $ and depth $ \tilde O (h) $.
By applying the reduction of Klein and Subramanian (see Lemma 4.1 in~\cite{KleinS97}, which is similar to Lemma~\ref{thm:approximate to exact reduction distributed} in Section~\ref{sec:preliminaries}), we obtain an exact SSSP algorithm that has work $ \tilde O ((n^3/h^3 + m h + m n/h) \log W) $ and depth $ \tilde O (h) $.
As the auxiliary algorithm is correct with high probability, the work bound of the exact SSSP algorithm holds with high probability (and in expectation).
\end{proof}

\section{Conclusion and Discussion}

In this paper, we have given two new algorithms for computing SSSP in the CONGEST model, the first performing $ \tilde O (\sqrt{n D}) $ rounds and the second performing $ \tilde O (\sqrt{n} D^{1/4} + n^{3/5} + D) $ rounds.
Our algorithms follow a general scaling approach for reducing the problem to approximate distance computation.
However, not any arbitrary approximate SSSP algorithm is adequate for this reduction as the returned distance estimates must additionally `dominate' the edge weights of the graph.
Our main contribution is a schematic auxiliary algorithm that computes suitable distance estimates, which in particular extends and improves upon prior work in the PRAM model.
We then provide different ways of implementing this scheme.

Currently, a natural barrier for directed SSSP algorithms is $ \tilde O (\sqrt{n} D^{1/4} + D) $ rounds as this is the running time of the fastest known algorithm for single-source reachability~\cite{GhaffariU15}.
We do meet this barrier in our approximation algorithm, but when solving SSSP exactly, \ref{step:final Bellman-Ford} of our auxiliary algorithm is a bottleneck: we cannot implement it fast enough on the skeleton graph to get the $ \tilde O (\sqrt{n} D^{1/4} + D) $ bound.
A simple counter-example (see Fig.~\ref{fig:approximation counterexample}) shows that \ref{step:final Bellman-Ford} cannot simply be omitted.
Finding an SSSP algorithm performing only $ \tilde O (\sqrt{n} D^{1/4} + D) $ rounds therefore remains a major open problem.
In the same respect, it is also very interesting to understand whether $ \tilde \Theta (n^{1/2}D^{1/4} + D) $ is the right bound:
it seems both challenging to break the $ n^{1/2} D^{1/4} $ bound for the reachability problem or to prove an $ n^{1/2}D^c $ lower bound, for any constant~$ c $, for the directed exact SSSP problem. 

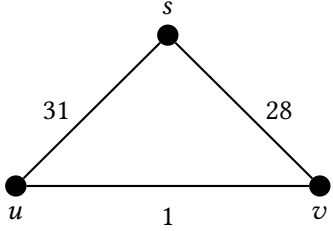
\begin{figure}
\centering
\begin{tikzpicture}
\tikzstyle{vertex}=[circle,fill=black,minimum size=8pt,inner sep=0pt,outer sep=0pt]

\node [vertex,label=above:$s$] (s) at (0, 0) {};
\node [vertex,label=below:$u$] (u) at (-2, -2) {};
\node [vertex,label=below:$v$] (v) at (2, -2) {};

\draw [thick] (s) edge node [label=left:$31$] {} (u);
\draw [thick] (s) edge node [label=right:$28$] {} (v);
\draw [thick] (u) edge node [label=below:$1$] {} (v);
\end{tikzpicture}
\caption{This graph $ G $ shows that \ref{step:final Bellman-Ford} of our auxiliary algorithm cannot be omitted. Using the algorithm of Lemma~\ref{lem:approximate bounded hop distances CONGEST} with $ \epsilon = 1 $ and $ h = 2 $ we obtain distance estimates $ \tilde d (s, u) = 28 $ and $ \tilde d (s, v) = 32 $.
However, $ \tfrac{1}{2} \cdot \tilde d (s, v) > \tfrac{1}{2}  \cdot \tilde d (s, u) + w_G (u, v) $ and thus these distance estimates do not dominate the edge weights as demaned in the preconditions of Theorem~\ref{thm:approximate to exact reduction distributed}.
}\label{fig:approximation counterexample}
\end{figure}

Another challenge lies in finding a \emph{deterministic} SSSP algorithm that outperforms the distributed implementation of the Bellman-Ford algorithm with its round complexity of $ O (n) $.
In our specific scaling approach, the deterministic procedure of Henzinger, Krinninger, and Nanongkai~\cite{HenzingerKN16} for computing a skeleton graph cannot be applied.
This technique only works for undirected graphs and in our scaling algorithm we necessarily have to deal with directed graphs, even if the input graph is undirected.
The recent deterministic construction in the APSP algorithm of Agarwal et al.~\cite{AgarwalRKP18} is also not an option because it takes $ \Omega (n) $ rounds.

In general, it is very appealing to understand how fast we can solve graph problems exactly in the CONGEST model.
In the past few years there have been many tight approximation algorithms developed in this model, but not much was known for exact algorithms.
Two particular problems, as mentioned in~\cite{HuangNS17}, are exact maximum matching and minimum cut; the former was recently shown to admit an algorithm performing $ O (n \log n) $ rounds~\cite{AhmadiKO18} while no non-trivial algorithm is known for the latter.

\section*{Acknowledgment}

This project has received funding from the European Research Council (ERC) under the European
Unions Horizon 2020 research and innovation programme under grant agreement No 715672. Danupon
Nanongkai was also partially supported by the Swedish Research Council (Reg.\ No.\ 2015-04659).
We thank the anonymous reviewers of FOCS 2018 for their helpful comments.
We thank Cliff Liu for making us aware of a formal problem in our treatment of $0$-weight edges in a previous version of this paper.

\appendix
\section*{Appendix}
\section{From Approximate to Exact Solutions}\label{apx:reduction from approximate to exact}

In the following, we give a proof of Theorem~\ref{thm:approximate to exact reduction distributed}, i.e., we show how the auxiliary algorithm for directed graphs with non-negative integer edge weights can be extended to computing exact SSSP.
The reduction is given in the paper by Klein and Subramanian~\cite{KleinS97}, but some errors introduced in the typesetting might make the corresponding part of their journal paper hard to read; we therefore include the proof for the readers' convenience.

\begin{proof}[Proof of Theorem~\ref{thm:approximate to exact reduction distributed}]
We design a recursive algorithm $ \mathcal{B} $ that, given a  weighted directed graph $ G = (V, E, w_G) $ with source nodes $ s $ and an upper bound $ \Delta \geq 0 $, computes SSSP in a weighted directed graph if the distance from the source node $ s $ to any other node is at most $ \Delta $, where initially $ \Delta = nW $.

If $ \Delta < 1 $, then the algorithm returns $ \tilde{\dist} (s, v) = 0 $ for every node $ v \in V $.
Otherwise, the algorithm proceeds as follows:
\begin{enumerate}
\item Set $ \delta := \max(\lfloor \log{\Delta} - \log{n} - 1 \rfloor, 0) $.
\item Construct $ \hat{G} = (V, \hat{E}, w_{\hat{G}}) $ by setting $ \hat{E} := \{ e \in E : w_G (e) \leq \Delta \} $ and
\begin{equation*}
w_{\hat{G}} (e) := \left\lfloor \tfrac{w_G (e)}{2^\delta} \right\rfloor
\end{equation*}
for every edge $ e \in \hat{E} $.
\item Compute a distance estimate $ \hat{\dist} (s, v) $ for every node $ v \in V $ by running the auxiliary algorithm $ \mathcal{A} $ on $ \hat{G} $.
\item Construct $ G' = (V, E, w_{G'}) $ by setting
\begin{equation*}
w_{G'} (u, v) := w_G (u, v) + 2^\delta \hat{\dist} (s, u) - 2^\delta \hat{\dist} (s, v)
\end{equation*}
for every edge $ (u, v) \in E $.
\item Recursively compute $ \dist_{G'} (s, v) $ for every node $ v \in V $ by running $ \mathcal{B} $ on $ G' $ with $ \Delta' := \lfloor \tfrac{3}{4} \Delta \rfloor $.
\item Return $ \tilde{\dist} (s, v) := \dist_{G'} (s, v) + 2^\delta \hat{\dist} (s, v) $ for every node $ v \in V $.
\end{enumerate}

Observe that $ \Delta $ reduces by a $ \tfrac{1}{4} $-fraction in each recursive call and thus the number of repetitions is $ O (\log{n W}) $.
Furthermore, the weight of each edge $ e \in \hat{E} $ consists of $ O (\log{n}) $ bits and thus the algorithm~$ \mathcal{A'} $ is called on a graph with integer edge weights in the range $ \{0, 1, \dots, O (n) \} $.
Obviously, all operations of~$ \mathcal{B} $, except for the calls~to $ \mathcal{A'} $ can be implemented in $ \tilde O (D \log{W}) $ broadcast rounds in the CONGEST model.
To establish the correctness of the algorithm, it is left to prove that the returned distances are correct, that the edge weights in the graph $ G' $ are non-negative, and that the distances from the source in $ G' $ are indeed bounded by $ \Delta' $.

First, we show that $ \tilde{\dist} (s, v) = \dist_G (s, v) $ for every node $ v \in V $.
Consider a path $ \pi $ from $ s $~to~$ v $ in~$ G $.
\begin{align*}
w_{G'} (\pi) &= \sum_{e \in \pi} w_{G'} (e) \\
 &= \sum_{e \in \pi} \left[ w_G (e) + 2^\delta \hat{\dist} (s, u) - 2^\delta \hat{\dist} (s, v) \right] \\
 &= 2^\delta \hat{\dist} (s, s) - 2^\delta \hat{\dist} (s, v) + \sum_{e \in \pi} w_G (e) \\
 &= w_G (\pi) - 2^\delta \hat{\dist} (s, v)
\end{align*}
Now let $ \pi^* $ be the shortest path from $ s $ to $ v $ in $ G $.
By the arguments above we have $ w_{G'} (\pi^*) = w_G (\pi^*) - 2^\delta \hat{\dist} (s, v) = \dist_G (s, v) - 2^\delta \hat{\dist} (s, v) $ and for every other path $ \pi $ from $ s $ to $ v $ we have $ w_{G'} (\pi) = w_G (\pi) - 2^\delta \hat{\dist} (s, v) \geq w_G (\pi^*) - 2^\delta \hat{\dist} (s, v) $.
Thus, $ \pi^* $ is also a shortest path from $ s $ to $ v $ in $ G' $, i.e. $ \dist_{G'} (s, v) = w_{G'} (\pi^*) $.
We therefore have
\begin{equation}
\dist_G (s, v) = \dist_{G'} (s, v) + 2^\delta \hat{\dist} (s, v) \label{eq:correctness estimate additive}
\end{equation}
as desired.

Second, we show that the weights of the graph on which the algorithm is called recursively are non-negative, i.e., $ w_{G'} (u, v) \geq 0 $ for every edge $ (u, v) \in E' $.
By~\eqref{eq:reduction domination} we have
\begin{equation*}
\hat{\dist} (s, v) \leq \hat{\dist} (s, u) + w_{\hat{G}} (u, v)
\end{equation*}
which is equivalent to
\begin{equation*}
2^\delta \hat{\dist} (s, v) \leq 2^\delta \hat{\dist} (s, u) + 2^\delta w_{\hat{G}} (u, v) \, .
\end{equation*}
We now have
\begin{equation*}
w_{G'} (u, v) = w_G (u, v) + 2^\delta \hat{\dist} (s, u) - 2^\delta \hat{\dist} (s, v) \geq w_G (u, v) - 2^\delta w_{\hat{G}} (u, v) \geq w_G (u, v) - w_G (u, v) = 0
\end{equation*}
as desired.

Third, we show that $ \dist_{G'} (s, v) \leq \Delta' $ for every node $ v \in V $.
Let $ \pi $ be a shortest path from $ s $~to~$ v $ in~$ \hat{G} $.
By \eqref{eq:reduction approximation}, and the definition of the edge weights in $ \hat{G} $, we have
\begin{equation*}
\hat{\dist} (s, v) \geq \tfrac{1}{2} \cdot \dist_{\hat{G}} (s, v) = \tfrac{1}{2} \cdot w_{\hat{G}} (\pi) = \tfrac{1}{2} \cdot \sum_{e \in \pi} w_{\hat{G}} (e) = \tfrac{1}{2} \cdot \sum_{e \in \pi} \left\lfloor \tfrac{w_G (e)}{2^\delta} \right\rfloor \, .
\end{equation*}
If $ \delta > 0 $, $ \delta = \lfloor \log{\Delta} - \log{n} - 1 \rfloor $ and $ 2^\delta \leq 2^{\log{(\Delta / (2n))}} = \tfrac{\Delta}{2 n} $.
We thus have
\begin{align*}
2^\delta \cdot \sum_{e \in \pi} \left\lfloor \tfrac{w_G (e)}{2^\delta} \right\rfloor &\geq 2^\delta \cdot \sum_{e \in \pi} (\tfrac{w_G (e)}{2^\delta} - 1) \\
 &\geq \sum_{e \in \pi} w_G (e) - 2^\delta n \\
 &= w_G (\pi) - 2^\delta n \\
 &\geq w_G (\pi) - \tfrac{1}{2} \Delta
\end{align*}
If $ \delta = 0 $, then $ 2^\delta \cdot \sum_{e \in \pi} \left\lfloor \tfrac{w_G (e)}{2^\delta} \right\rfloor = \sum_{e \in \pi} w_G (e) = w_G (\pi) $.
In any case we have
\begin{equation*}
2^\delta \cdot \sum_{e \in \pi} \left\lfloor \tfrac{w_G (e)}{2^\delta} \right\rfloor \geq w_G (\pi) - \tfrac{1}{2} \Delta \, .
\end{equation*}
Thus, by \eqref{eq:correctness estimate additive}, we get
\begin{align*}
\dist_{G'} (s, v) &= \dist_G (s, v) - 2^\delta \hat{\dist} (s, v) \\
 &\leq \dist_G (s, v) - \tfrac{1}{2} \cdot 2^\delta \cdot \sum_{e \in \pi} \left\lfloor \tfrac{w_G (e)}{2^\delta} \right\rfloor \\
 &\leq \dist_G (s, v) - \tfrac{1}{2} \cdot w_G (\pi) + \tfrac{1}{4} \Delta \\
 &\leq \dist_G (s, v) - \tfrac{1}{2} \cdot \dist_G (s, v) + \tfrac{1}{4} \Delta \\
 &= \tfrac{1}{2} \cdot \dist_G (s, v) + \tfrac{1}{4} \Delta \\
 &\leq \tfrac{1}{2} \Delta + \tfrac{1}{4} \Delta \\
  &= \tfrac{3}{4} \Delta = \Delta'
\end{align*}
as desired.

Finally, observe that the claimed Las Vegas guarantee can be achieved by adding the following verification step at the end of the algorithm:
(1) every node $ v $ sends its result $ \tilde{\dist} (s, v) $ to its neighbors in $ O (\log{n}) $ rounds, (2) every node $ v $ internally checks whether the condition $ \tilde{\dist} (s, v) = \min_{(u, v) \in E} \tilde{\dist} (s, u) + w_G (u, v) $ holds, and (3) the network determines whether the condition holds for all nodes by a global upcast in $ O (D) $ rounds.
If the verification step fails, then the whole SSSP algorithm is repeated.
As $ \mathcal{A} $ is correct with high probability, no repetition is needed with high probability.
Furthermore, by the waiting time bound, there are in expectation only a constant number of repetitions until the verification step succeeds.
\end{proof}

\section{From Positive to Non-Negative Weights}\label{apx:reduction from positive to non-negative weights}

We now argue that for approximate SSSP algorithms it is sufficient to design an algorithm for graphs with positive integer edge weights as such an algorithm can be extended to work for graphs with non-negative integer edge weights.
This reduction seems to be folklore, but we include a formal proof for completeness.
Note that for any given $ \epsilon $, we can always set $ \epsilon' = \tfrac{1}{\lceil 1 / \epsilon \rceil} $ to gurantee that $ \tfrac{1}{\epsilon'} $ is integer, as demanded in the following.
Furthermore, it is usally safe to assume that $ \tfrac{1}{\epsilon} $ is polynomial in $ n $ as otherwise the Bellman-Ford algorithm would compute SSSP faster than approximation algorithms with a round complexity that has polynomial dependence on $ \tfrac{1}{\epsilon} $.

\begin{lemma}\label{lem:positive to non-negative reduction distributed}
Assume there is an algorithm $ \mathcal{A} $ in the CONGEST model that, given a graph $ G' = (V, E, w_{G'}) $ with positive integer edge weights and a source node $ s $, computes a distance estimate $ \dist' (s, \cdot) $ such that
\begin{equation*}
\dist_{G'} (s, v) \leq \dist' (s, v) \leq (1 + \epsilon) \dist^h_{G'} (s, v)
\end{equation*}
for every node $ v \in V $ and for some $ 1 \leq h \leq n - 1 $, where $ \epsilon \geq 0 $ is known and $ \tfrac{1}{\epsilon} $ is integer.
Then there is an algorithm $ \mathcal{B} $ in the CONGEST model that, given a graph $ G = (V, E, w_G) $ with non-negative integer weights in the range $ w_G (e) \in \{ 0, 1, \dots, W \} $ for every edge $ e \in E $ and a source node $ s $, computes a distance estimate $ \tilde{\dist} (s, \cdot) $ such that
\begin{equation*}
\dist_G (s, v) \leq \tilde{\dist} (s, v) \leq  (1 + \epsilon) \dist^h_G (s, v)
\end{equation*}
for every node $ v \in V $ by making one call to algorithm~$ \mathcal{A} $ (on a graph $ G' = (V, E, w_{G'}) $ with integer weights in the range $ w_{G'} (e) \in \{ 1, \dots, O (n W / \epsilon) \} $ for every edge $ e \in E $) and performing internal computation.
\end{lemma}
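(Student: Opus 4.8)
The plan is to reduce the non-negative case to the positive case by a single \emph{weight-scaling} transformation that turns every $0$-weight edge into a weight-$1$ edge while keeping the resulting distance distortion small. Concretely, fix the integer $\lambda := \lceil 2n/\epsilon \rceil$ (well-defined since $1/\epsilon$ is an integer), and let $\mathcal{B}$ build $G' = (V, E, w_{G'})$ with
\[
w_{G'}(e) := \lambda \cdot w_G(e) + 1 \qquad \text{for every } e \in E .
\]
All these weights are positive integers in $\{1, \dots, \lambda W + 1\} = \{1, \dots, O(nW/\epsilon)\}$, so $\mathcal{A}$ is applicable; $\mathcal{B}$ runs $\mathcal{A}$ on $G'$ with the same parameters $h$ and $\epsilon$, obtains $\dist'(s,\cdot)$ with $\dist_{G'}(s,v) \le \dist'(s,v) \le (1+\epsilon)\,\dist_{G'}^h(s,v)$, and returns, by internal computation only, $\tilde{\dist}(s,v) := \lfloor \dist'(s,v)/\lambda \rfloor$.

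The analysis rests on the identity that any path $\pi$ with $k$ edges satisfies $w_{G'}(\pi) = \lambda \cdot w_G(\pi) + k$. From this I would extract two facts. First, $\dist_{G'}(s,v) \ge \lambda \cdot \dist_G(s,v)$, because each $s$--$v$ path in $G'$ has $w_{G'}$-weight at least $\lambda$ times its $w_G$-weight. Second, $\dist_{G'}^h(s,v) \le \lambda \cdot \dist_G^h(s,v) + h$, by taking a minimum-weight $s$--$v$ path on at most $h$ edges in $G$ and reading off its (at most $h$-edge) weight in $G'$. The lower bound $\dist_G(s,v) \le \tilde{\dist}(s,v)$ is then immediate: $\dist'(s,v) \ge \dist_{G'}(s,v) \ge \lambda\,\dist_G(s,v)$, so $\dist'(s,v)/\lambda \ge \dist_G(s,v)$, and since $\dist_G(s,v)$ is an integer the floor still dominates it.

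The upper bound is where the real work (and the only use of the hypothesis that $1/\epsilon$ is an integer) lies. Combining the second structural fact with the guarantee of $\mathcal{A}$ gives $\dist'(s,v)/\lambda \le (1+\epsilon)\dist_G^h(s,v) + (1+\epsilon)h/\lambda$, and the main obstacle is to show that flooring the left-hand side brings it down to at most $(1+\epsilon)\dist_G^h(s,v)$ — i.e., that the additive slack $(1+\epsilon)h/\lambda$ cannot carry us across an integer boundary past $(1+\epsilon)\dist_G^h(s,v)$. The key observation is that $(1+\epsilon)\dist_G^h(s,v)$ is an integer multiple of $\epsilon$, hence it is either an integer or at distance at least $\epsilon$ below the next integer; and our choice $\lambda = \lceil 2n/\epsilon\rceil > (1+\epsilon)h/\epsilon$ (using $h \le n-1$) makes the slack $(1+\epsilon)h/\lambda$ strictly less than $\epsilon$ (and less than $1$), so $\lfloor \dist'(s,v)/\lambda \rfloor \le (1+\epsilon)\dist_G^h(s,v)$ as required. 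The degenerate case $\dist_G^h(s,v) = 0$ is handled in the same spirit: then $\dist_{G'}^h(s,v) \le h \le n-1 < \lambda$, hence $\dist'(s,v) \le (1+\epsilon)(n-1) < \lambda$ and $\tilde{\dist}(s,v) = 0$. Since $\mathcal{B}$ makes exactly one call to $\mathcal{A}$ on a graph with edge weights in $\{1,\dots,O(nW/\epsilon)\}$ and otherwise only does local rounding, this establishes the claimed reduction.
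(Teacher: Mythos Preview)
Your proof is correct and follows essentially the same approach as the paper: both transform $w_G(e)\mapsto \lambda\,w_G(e)+1$ with $\lambda=\Theta(n/\epsilon)$, call $\mathcal{A}$ once, and recover $\tilde{\dist}$ by a floor. The only cosmetic difference is that the paper chooses $\lambda=(1+\tfrac{1}{\epsilon})n$ and outputs $\epsilon\lfloor \dist'(s,v)/((1+\epsilon)n)\rfloor$, which makes the quantity inside the floor an integer plus $(n-1)/n$ in the upper-bound step and so avoids your fractional-part argument about multiples of~$\epsilon$; your variant with $\lambda=\lceil 2n/\epsilon\rceil$ and $\lfloor \dist'(s,v)/\lambda\rfloor$ is equally valid and arguably more transparent.
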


\begin{proof}
The algorithm $ \mathcal{B} $ first constructs the graph $ G' = (V, E, w_{G'}) $ with weight
\begin{equation*}
w_{G'} (e) := \left(1 + \frac{1}{\epsilon} \right) n w_G (e) + 1
\end{equation*}
for every edge $ e \in E $.
Note that $ G' $ has integer edge weights because we assumed $ \frac{1}{\epsilon} $ to be integer.
Next, it runs the algorithm $ \mathcal{A} $ on $ G' $ to obtain a distance estimate $ \dist' (s, v) $ for every node $ v \in V $.
The algorithm then outputs the distance estimate
\begin{equation*}
\tilde{\dist} (s, v) := \epsilon \left\lfloor \frac{\dist' (s, v)}{(1 + \epsilon) n} \right\rfloor
\end{equation*}
for every node $ v \in V $.
Observe that besides the call to algorithm $ \mathcal{A} $ this algorithm only performs internal computation.
We proceed by proving its correctness.

First, observe that for every path $ \pi $ in the graph we have
\begin{equation*}
w_{G'} (\pi) = \sum_{e \in \pi} \left( \left(1 + \frac{1}{\epsilon} \right) n w_G (e) + 1 \right) = \left(1 + \frac{1}{\epsilon} \right) n \sum_{e \in \pi} w_G (e) + \sum_{e \in \pi} 1 = \left(1 + \frac{1}{\epsilon} \right) n w_G (\pi) + | \pi | \, ,
\end{equation*}
where $ | \pi | $ is the number of edges of $ \pi $.
We therefore have
\begin{equation*}
\dist_{G'} (s, v) \geq  \left(1 + \frac{1}{\epsilon} \right) n \dist_G (s, v)
\end{equation*}
and
\begin{equation*}
\dist^h_{G'} (s, v) \leq \left(1 + \frac{1}{\epsilon} \right) n \dist^h_G (s, v) + h \leq \left(1 + \frac{1}{\epsilon} \right) n \dist^h_G (s, v) + n-1 \, .
\end{equation*}

We prove the lower bound on $ \tilde{\dist} (s, v) $ with the following chain of inequalities:
\begin{equation*}
\tilde{\dist} (s, v) = \epsilon \left\lfloor \frac{\dist' (s, v)}{(1 + \epsilon) n} \right\rfloor \geq \epsilon \left\lfloor \frac{\dist_{G'} (s, v)}{(1 + \epsilon) n} \right\rfloor \geq \epsilon \left\lfloor \frac{\left(1 + \frac{1}{\epsilon} \right) n \dist_G (s, v)}{(1 + \epsilon) n} \right\rfloor = \epsilon \left\lfloor \frac{\dist_G (s, v)}{\epsilon} \right\rfloor = \dist_G (s, v) \, .
\end{equation*}
The last equality follows because $ \tfrac{1}{\epsilon} $ is integer.
We prove the upper bound on $ \tilde{\dist} (s, v) $ with the following chain of inequalities:
\begin{align*}
\tilde{\dist} (s, v) = \epsilon \left\lfloor \frac{\dist' (s, v)}{(1 + \epsilon) n} \right\rfloor &\leq \epsilon \left\lfloor \frac{(1 + \epsilon) \dist^h_{G'} (s, v)}{(1 + \epsilon) n} \right\rfloor \\
&= \epsilon \left\lfloor \frac{\dist^h_{G'} (s, v)}{n} \right\rfloor \\
&\leq \epsilon \left\lfloor \frac{\left(1 + \frac{1}{\epsilon} \right) n \dist^h_G (s, v) + n - 1}{n} \right\rfloor \\
&= \epsilon \left\lfloor \left(1 + \frac{1}{\epsilon} \right) \dist^h_G (s, v) + \frac{n-1}{n} \right\rfloor \\
&= \epsilon \left(1 + \frac{1}{\epsilon} \right) \dist^h_G (s, v) = (1 + \epsilon) \dist^h_G (s, v) \, .
\end{align*}
Again, we exploit that $ \tfrac{1}{\epsilon} $ is integer.
\end{proof}

\printbibliography[heading=bibintoc] 

\end{document}